\documentclass[letterpaper,11pt]{article} 
\pagestyle{plain}
\usepackage[letterpaper,hmargin=1in,vmargin=1in]{geometry} 
\usepackage{fullpage}
\usepackage[pdfstartview=FitH,colorlinks,linkcolor=blue,citecolor=blue]{hyperref} 

\usepackage[T1]{fontenc}
\usepackage{color} 
\usepackage{times} 
\usepackage{microtype,pdfsync} 
\usepackage{amsmath,amsfonts,amssymb}
\usepackage{amsthm} 
\usepackage{url} 
\usepackage{xspace,paralist}
\usepackage{lmodern}
\usepackage{float}
\floatstyle{boxed}
\restylefloat{figure}
\usepackage{multirow}
\usepackage{makecell}
\usepackage{ stmaryrd }

\usepackage{marvosym}
\usepackage{graphicx}
\usepackage{graphbox}
\usepackage{subcaption}
\usepackage{ bbold }
\usepackage{breakcites}
\usepackage{mathtools}

\usepackage{xfrac}

\usepackage{pifont}

\theoremstyle{plain}

\newtheorem{theorem}{Theorem}
\newtheorem{lemma}[theorem]{Lemma} 
\newtheorem{corollary}[theorem]{Corollary}

\newtheorem{conjecture}[theorem]{Conjecture}

\newtheorem{definition}[theorem]{Definition}

\newtheorem{remark}[theorem]{Remark}

\numberwithin{theorem}{section}
\numberwithin{equation}{section}

%fields and groups

\newcommand{\Z}{\mathbb{Z}}

\newcommand{\C}{\mathbb{C}}

%vectors

\newcommand{\vv}{{\mathbf{v}}}

\newcommand{\xv}{{\mathbf{x}}}

%matrices

\newcommand{\Id}{{\mathbf{I}}}

\newcommand{\Mm}{{\mathbf{M}}}

\newcommand{\Tr}{{\sf Tr}}

%sets
\newcommand{\As}{{\mathcal{A}}}

\newcommand{\Cs}{{\mathcal{C}}}
\newcommand{\Ds}{{\mathcal{D}}}

\newcommand{\Ms}{{\mathcal{M}}}
\newcommand{\Ns}{{\mathcal{N}}}
\newcommand{\Os}{{\mathcal{O}}}

\newcommand{\Qs}{{\mathcal{Q}}}

\newcommand{\Us}{{\mathcal{U}}}

\newcommand{\Xs}{{\mathcal{X}}}

\newcommand{\poly}{{\sf poly}}

\newcommand{\setupprotocol}[2]{%
	\expandafter\newcommand\csname protocol#1\endcsname{{\sf \Pi_{#2}}}%
	\expandafter\newcommand\csname gen#1\endcsname{{\sf Gen_{#2}}}%
	\expandafter\newcommand\csname genlossy#1\endcsname{{\sf GenLossy_{#2}}}%
	\expandafter\newcommand\csname setup#1\endcsname{{\sf Setup_{#2}}}%
	\expandafter\newcommand\csname enc#1\endcsname{{\sf Enc_{#2}}}%
	\expandafter\newcommand\csname dec#1\endcsname{{\sf Dec_{#2}}}%
	\expandafter\newcommand\csname extract#1\endcsname{{\sf Extract_{#2}}}%
	\expandafter\newcommand\csname derive#1\endcsname{{\sf Derive_{#2}}}%
	\expandafter\newcommand\csname embed#1\endcsname{{\sf Embed_{#2}}}%
	\expandafter\newcommand\csname rerand#1\endcsname{{\sf ReRand_{#2}}}%
	\expandafter\newcommand\csname trace#1\endcsname{{\sf Trace_{#2}}}%
	\expandafter\newcommand\csname goodtrace#1\endcsname{{\sf GoodTr_{#2}}}%
	\expandafter\newcommand\csname badtrace#1\endcsname{{\sf BadTr_{#2}}}%
	\expandafter\newcommand\csname gooddecoder#1\endcsname{{\sf GoodDec_{#2}}}%
	\expandafter\newcommand\csname goodf#1\endcsname{{\sf Goodfunc_{#2}}}%
	\expandafter\newcommand\csname badextract#1\endcsname{{\sf BadExtr_{#2}}}%
	\expandafter\newcommand\csname adv#1\endcsname{{\As_{#2}}}%	
	\expandafter\newcommand\csname sample#1\endcsname{{\sf Sample_{#2}}}%
	\expandafter\newcommand\csname diff#1\endcsname{{\sf Diff_{#2}}}%	
	\expandafter\newcommand\csname identify#1\endcsname{{\sf Indentify_{#2}}}%	
	\expandafter\newcommand\csname findtags#1\endcsname{{\sf FindTags_{#2}}}%	
	\expandafter\newcommand\csname confirmtags#1\endcsname{{\sf ConfirmTags_{#2}}}%	
	\expandafter\newcommand\csname eval#1\endcsname{{\sf Eval_{#2}}}%
	\expandafter\newcommand\csname sign#1\endcsname{{\sf Sign_{#2}}}%
	\expandafter\newcommand\csname ver#1\endcsname{{\sf Ver_{#2}}}%
	\expandafter\newcommand\csname prf#1\endcsname{{\sf F_{#2}}}%
	\expandafter\newcommand\csname Sim#1\endcsname{{\sf Sim_{#2}}}%

	\expandafter\newcommand\csname msk#1\endcsname{{\sf msk_{#2}}}%	
	\expandafter\newcommand\csname sk#1\endcsname{{\sf sk_{#2}}}%	
	\expandafter\newcommand\csname pk#1\endcsname{{\sf pk_{#2}}}%	
}

\setupprotocol{}{}

\newcommand{\Language}{{\sf L}}
\newcommand{\OILanguage}{{\sf OI\text{-} L}}
\newcommand{\QFT}{{\sf QFT}}
\newcommand{\QMA}{{\sf QMA}}
\newcommand{\QCMA}{{\sf QCMA}}
\newcommand{\OIQMA}{{\sf OI\text{-}QMA}}
\newcommand{\OIQCMA}{{\sf OI\text{-}QCMA}}
\newcommand{\NP}{{\sf NP}}

\newcommand{\Bernoulli}{{\sf Bernoulli}}

\def\E{\mathop{\mathbb E}}

\newcommand{\ignore}[1]{}

\floatstyle{plain}
\restylefloat{figure}

\begin{document}

\title{Toward Separating QMA from QCMA with a Classical Oracle}
\author{{\sc Mark Zhandry}\\{\tt mzhandry@gmail.com}\\NTT Research}
\date{}

\maketitle

\begin{abstract}QMA is the class of languages that can be decided by an efficient quantum verifier given a \emph{quantum} witness, whereas QCMA is the class of such languages where the efficient quantum verifier only is given a \emph{classical} witness. A challenging fundamental goal in quantum query complexity is to find a classical oracle separation for these classes. In this work, we offer a new approach towards proving such a separation that is qualitatively different than prior work, and show that our approach is sound assuming a natural statistical conjecture which may have other applications to quantum query complexity lower bounds.
\end{abstract}
	
\section{Introduction}\label{sec:intro}

\emph{Do quantum witnesses offer more power than classical witnesses?} Slightly more precisely, there are two natural ways to generalize $\NP$ from the classical setting to the quantum setting: $\QMA$ (for Quantum Merlin-Arthur) is the set of languages decidable by efficient quantum algorithms with quantum witnesses, whereas $\QCMA$ (for Quantum \emph{Classical} Merlin-Arthur) is the set of languages decidable by efficient quantum algorithms with \emph{classical} witnesses. A long-standing fundamental question, first raised by~\cite{AhaNav02}, is whether or not these two generalizations of $\NP$ are the same.

As an unconditional separation between $\QMA$ and $\QCMA$ is out of reach given the state of complexity theory, the community has focused on proving \emph{oracle} separations. The first such separation~\cite{CCC:AarKup07} gives a \emph{quantum} oracle; that is, an oracle that implements a unitary operation and which can be queried on quantum states. A major open question has been whether there is a \emph{classical} oracle separation; that is, an oracle that implements a classical function, but is accessible in superposition. Classical oracle separations are considered more standard in the community. 

An early candidate classical oracle separation was given by~\cite{Lutomirski11}, but no proof was given. More recently, there have been several results making progress towards this goal by proving separations under different restrictions on how the oracle is accessed.~\cite{FefKim18} show a separation assuming the classical oracle is an ``in-place permutation oracle'', a non-standard modeling where the oracle irreversibly permutes the input state.~\cite{NatNir23} show a separation, assuming the witness is required to be independent of certain choices made in constructing the oracle. A very recent line of work has used quantum advantage relative to unstructured oracles~\cite{FOCS:YamZha22} to separate $\QMA$ from $\QCMA$:~\cite{EC:Liu23,ITCS:LLPY24} give a separation assuming the verifier can only make classical oracle queries, and more recently~\cite{BenKun24} give a separation which allows the verifier quantum queries, but assumes the adaptivity of the queries is sub-logarithmic. A standard classical oracle separation that makes no constraints on how the oracle is accessed or how the witness is created still remains open.

The central challenge in separating $\QMA$ from $\QCMA$ relative to a classical oracle seems to be the following. Consider a language in $\QMA\setminus\QCMA$, and consider measuring the $\QMA$ witness in the computational basis. The resulting classical string must \emph{not} be accepted by the $\QMA$ verifier, since otherwise we would then have a classical witness for the language, putting it in $\QCMA$. Therefore, in some sense, the $\QMA$ verifier needs to verify that the witness is in superposition. In order to allow for such verification using only a classical oracle, existing approaches require highly structured oracles. But then to actually prove the language is outside of $\QCMA$, we need a quantum query complexity lower-bound, and unfortunately the techniques we have are often not amenable to highly structured oracles. Additionally, any witness would naturally be treated as a form of oracle-dependent advice about the oracle, and most quantum query complexity techniques are not very good at distinguishing between quantum advice and classical advice. In other words, if a typical technique succeeded at proving a language is outside of $\QCMA$, then if it cannot distinguish quantum vs classical advice, it would likely also show that the language is outside $\QMA$ as well, thus failing to give a separation.

\paragraph{Our Work.} We give a new approach for separating $\QMA$ from $\QCMA$ relative to a classical oracle. Like prior work, we are unable to prove our oracle separation unconditionally. However, our separation appears much less structured than the prior work (though this is an intuitive statement rather than a formal one), and appears much more amenable to existing quantum query complexity techniques. In particular, we prove under a natural conjecture about $k$-wise independent distributions that our separation indeed works. We believe our work adds to the evidence that $\QMA$ and $\QCMA$ are indeed distinct relative to classical oracles, and may offer a new path toward a proof.

\subsection{Our Separating Oracles}

Our basic idea is the following. An instance in our language will correspond to a subset $S\subseteq [N]$ where $N$ is exponentially-sized. $S$ will be chosen to only contain a negligible fraction of $[N]$, but still be super-polynomial sized. We will provide a classical oracle (accessible in superposition) that decides membership in $S$. We will often simply call this oracle $S$.

Next, we choose a random state $|\psi\rangle=\sum_{y\in S}\alpha_y|y\rangle$ with support on the set $S$. The state $|\psi\rangle$ will be our witness state for the $\QCMA$ instance. An incomplete verification for $|\psi\rangle$ proceeds by simply checking that $|\psi\rangle$ has support contained in $S$. Essentially, $|\psi\rangle$ is acting as a $\QMA$ witness that $S$ is non-empty. But there is also a simple $\QCMA$ witness for this fact: any classical value $y\in S$.

We will instead attempt to turn $|\psi\rangle$ into a witness that $S$ is \emph{super-polynomial} sized. To do so, we will add a second oracle $U$ which essentially attests to $|\psi\rangle$ being a superposition over super-polynomially-many points. Intuitively, we want to show that $U$ can distinguish between $|\psi\rangle$ and any state whose support is only polynomial-sized.

To construct $U$, let $|\hat{\psi}\rangle$ denote the quantum Fourier transform (QFT) of $|\psi\rangle$. We observe that if $|\psi\rangle$ has support on a single point $y$, then $|\hat{\psi}\rangle$ will have uniform amplitude on all points in $[N]$ (though with complex phases on these points). On the other hand, for random $|\psi\rangle$ with support on the large set $S$, the amplitudes on different points will vary. Concretely, while the expected squared-amplitude on any point $z\in[N]$ is $1/N$, there is a reasonable chance that it could be, say, smaller than $1/2N$ or larger than $2/N$.

We will choose $U$ to be a subset of $[N]$ consisting of points where $|\hat{\psi}\rangle$ has squared-amplitude somewhat larger than $1/N$. We can then have, say, the total squared-amplitude of $|\hat{\psi}\rangle$ on points in $U$ be roughly $3/4$ while $|U|/N$ is only roughly $1/2$. In this case, the QFT of a classical string $y$ will have squared-amplitude on $U$ of only $1/2$. Thus, $U$ enables distinguishing $|\psi\rangle$ from a classical input. We will therefore give out an oracle for deciding membership in $U$. The verifier will first confirm that $|\psi\rangle$ has support only on $S$ using the oracle for $S$. Then it will compute $|\hat{\psi}\rangle$ via the QFT, and check that the support is in $U$. Overall the verifier accepts with probability $3/4$. Instances not in the language will consist of $S,U$ pairs where $S$ is very small but non-empty. We show that, for a certain way of choosing $U$, that there is no $\QMA$ witness in the case of such small $S$. Thus, our language is in $\QMA$ relative to the oracles for $S,U$.

\subsection{$\QCMA$ hardness} 

We now need a way to argue that our language does not have $\QCMA$ witnesses. While we showed that a classical string $y\in S$ cannot serve as a witness, this alone does not preclude some more clever way of attesting to $S$ being large. In particular, the witness $w$ could contain several points in $S$. Worse, perhaps queries to $U$ may reveal a significant amount of information about $S$, which may help deciding if $S$ is large or small. We make progress toward showing $\QCMA$ hardness of our oracle problem, as we now describe.

Consider a hypothetical $\QCMA$ verifier $V$ which is given a classical witness $w$ and makes quantum queries to $S,U$, and accepts in the case $S$ is large. We want to show that we can replace $S$ with a small set $S'$, and $V$ will still accept with too-high a probability, meaning it incorrectly claims that $(S',U)$ is in the language, despite $S'$ being small.

Toward that end, we will choose $S'$ to be all points in $S$ that are also ``heavy'' among the queries $V$ makes to $S$. That is, points $y\in S$ such that the query amplitude in $V$'s quantum queries to $S$ is above some inverse-polynomial threshold. As the total query amplitude of all points is just the number of queries of $V$ and hence polynomial, the number of heavy $y$ is polynomial. Hence, $S'$ is small. We can also construct $S'$ efficiently: for each heavy $y$, running $V$ and measuring a random query will have an inverse polynomial chance of producing $y$. By repeating this process a polynomial number of times, we can collect all heavy queries.

But why should $S$ and $S'$ be indistinguishable to $V$? By standard quantum query analysis, if $V$ can distinguish $S$ from $S'$, then it's queries must place significant amplitude on $S\setminus S'$. By measuring a random query, we therefore obtain with significant probability a $y\in S\setminus S'$. But since $y$ is not heavy, repeating this process many times will produce many different $y$. This means that if $V$ can distinguish $S$ from $S'$, it must actually be able to generate essentially arbitrarily large (polynomial) numbers of $y\in S$. Denote the number of $y$ by $L$.

\paragraph{$V$ that do not query $U$.} Let us first assume that $V$ makes no queries to $U$. In this case, we can argue that any distinguishing $V$ actually violates known query complexity results for multiple Grover search. In particular,~\cite{HamMag23} show that an algorithm making polynomially-many queries to a random sparse $S$ cannot produce $L$ points in $S$ except with probability bounded by $2^{-L}$ (see Lemma~\ref{lem:multi} for precise statement). Now, this result assumes no advice is provided about $S$, but the witness $w$ counts as advice. Fortunately, we can handle the advice using the strong exponential lower bound provided by~\cite{HamMag23}. Consider running the process above with a \emph{random} $w'$ instead of $w$. In the event $w'=w$, the process will produce $L$ points in $S$. Moreover, $w'=w$ with probability $2^{-|w|}$. By setting $L\gg |w|$ (recall that we can make $L$ an arbitrarily large polynomial), we therefore obtain an algorithm with no advice which produces $L$ points in $S$ with probability $2^{-|w|}\gg 2^{-L}$, contradicting the hardness of multiple Grover search.

\begin{remark}The above strategy inherently uses the fact that $w$ is classical. If $V$ had a quantum witness/advice, running it even once and measuring a random query to find a $y\in S$ would potentially destroy the witness, meaning further runs of $V$ are not guaranteed to produce \emph{any} points in $S$. This is the key place in the proof where we distinguish between classical and quantum witnesses, hopefully indicating a promising route toward proving a separation between $\QMA$ and $\QCMA$.
\end{remark}

\paragraph{$k$-wise independent $U$.} The above strategy does not work for handling queries to $U$. The problem is that $U$ takes potentially $N$ bits (which is exponential) to describe, meaning treating $U$ as advice would require setting $L\gg N$, at which point the bounds from~\cite{HamMag23} do not apply.

We will for now assume that $U$ is $k$-wise independent for a super-polynomial $k$, and return to justifying this assumption later.  We will assume such $k$-wise independence holds even conditioned on $S$ (but not conditioned on $|\psi\rangle$, whose correlation with $U$ is crucial for the correctness of our $\QMA$ verifier). A result of~\cite{C:Zhandry12} (formally described in Lemma~\ref{lem:kwiseindep}) shows that a $k$-wise independent $U$ is actually perfectly indistinguishable from a uniform $U$, for all quantum query algorithms making at most $k/2$ queries. Since $k$ is super-polynomial, we thus obtain perfect indistinguishability against all polynomial-query algorithms, including our process above for generating points in $S$. Consequently, the process above succeeds in generating $L$ points in $S$ even if $U$ is replaced by a uniformly random $U$ independent of $S$. But such a uniform $U$ can be simulated without knowledge of $S$ at all, and hence the lower-bound of~\cite{HamMag23} actually applies to algorithms making queries to uniform $U$. Thus under the assumption that $U$ is $k$-wise independent, we can justify $\QCMA$ hardness.

\paragraph{Our $U$ are ``close'' to $k$-wise independent.} We show that, by choosing $U$ carefully in a probabilistic way, $U$ is ``close'' to $k$-wise independent, even conditioned on $S$. By ``close'', we concretely mean that for every subset $T\subseteq[N]$ of size at most $k$, that $2^{-|T|}\leq \Pr[T\cap U=\emptyset]\leq (1+\epsilon)\times 2^{-|T|}$, for some very small $\epsilon$ which depends on $k,|S|,N$. Note that true $k$-wise independence is equivalent to $\Pr[T\cap U=\emptyset]=2^{-|T|}$ for all $T$ of size at most $k$.

\paragraph{Is this ``close'' enough?} Unfortunately, we do not know how to prove that $U$ which are close to $k$-wise independent are sufficient to make our approach work. The issue is that~\cite{C:Zhandry12} only applies to perfect $k$-wise independence, and there are counterexamples that show that the result does \emph{not} hold when replaced with some approximate notions of $k$-wise independence. 

The good news is that our notion of closeness is quite different from the usual notion of ``biased'' or ``almost'' $k$-wise independence used in the literature. Specifically, those notions allow for an additive error in any of the marginal probabilities, whereas we impose a strong \emph{multiplicative} error bound. This gives us hope that our distribution of $U$, despite not being truly $k$-wise independent, may still be close enough to get a separation.

We make progress toward justifying this claim. We make a conjecture that any distribution over $U$ which is ``close'' to $k$-wise independent (in our sense) can be turned into a distribution $U'$ that is truly $k$-wise independent. Importantly, $U$ and $U'$ will agree on almost all points. More precisely, for any $z$, the probability that $U$ and $U'$ differ on $z$ is negligible. See Conjecture~\ref{conj:stat} for the formal statement of this conjecture. Observe that this conjecture is simply a statement about distributions, and has nothing on the surface to do with quantum query complexity.

Under this conjecture, we complete the full oracle separation between $\QMA$ and $\QCMA$. Instead of giving out the oracle $U$, we simply give out the oracle $U'$, and prove $\QCMA$ hardness following the above approach. Our statistical conjecture is then used to show that replacing $U$ with $U'$ does not break our $\QMA$ verifier. Concretely, by standard query complexity arguments, we show that if our verifier works on $U$, then it must also work (with negligibly larger error) on $U'$.

\begin{remark}Our statistical conjecture gives \emph{one} way to prove an oracle separation between $\QMA$ and $\QCMA$ following our approach. Our conjecture could of course turn out to be false. Even in this case, our oracles still seem likely to give a separation, and there may be many other paths toward proving it. Perhaps if the general conjecture is false, our particular $U$ can still be converted into $U'$ as needed. Or maybe being ``close'' to $k$-wise independent is directly sufficient for a separation and can be proven via quantum query complexity arguments. Possibly there is a different distribution over $|\psi\rangle$ and associated $U$ where $U$ actually is perfectly $k$-wise independent. Thus, independent of our particular statistical conjecture, we believe our new approach at a separation gives a promising new approach toward separating $\QMA$ from $\QCMA$ relative to classical oracles.
\end{remark}
\section{Preliminaries and notation}\label{sec:prelim}

A function $f(n)\leq n^{O(1)}$ is \emph{polynomial} and $f(n)\geq n^{-O(1)}$ is inverse polynomial. Functions $f(n)\geq n^{\omega(1)}$ are superpolynomial and $f(n)\leq n^{-\omega(1)}$ are negligible.

Let $\Bernoulli_p$ denote the distribution over $\{0,1\}$ which outputs $1$ with probability $p$. Let $\Bernoulli_p^N$ denote the distribution over $\{0,1\}^N$ consisting of $N$ iid samples from $\Bernoulli_p$. We will associate $\{0,1\}^N$ with subsets of $[N]$, where $S\subseteq[N]$ is associated with the vector $\vv$ such that $v_x=1$ if $x\in S$. We will also associate $\{0,1\}^N$ (and hence also subsets of $[N]$) with functions from $[N]$ to $\{0,1\}$, where $S$ is associated with its indicator function $f$, where $f(x)$ is 1 if and only if $x\in S$.

A joint distribution $X_1,\cdots X_n$ is $k$-wise independent if, for each subset $T$ of size at most $k$, $(X_i)_{i\in T}$ are independent random variables. $X_1,\cdots,X_n$ is $k$-wise \emph{uniform} independent if each $(X_i)_{i\in T}$ are independent uniform random variables over their respective domains.

\paragraph{Complex Normal Distribution.} Let $\Ns^\C_{\mu,\sigma}$ denote complex normal distribution with width $\sigma$. The probability density function of this distribution is given by $\Pr[x\gets\Ns^\C_{\mu,\sigma}]=\frac{1}{\pi\sigma^2}e^{-|x-\mu|^2/\sigma^2}$.

Two basic identities that will be useful when computing integrals involving complex normal distributions are the following. Let $\vv$ is a vector of $n$ complex numbers, $\Mm$ is a complex $n\times n$ matrix, and $\int_\C d\vv$ means to integrate over all complex vectors $\vv$. Then
\begin{align*}
	\int_\C e^{-\vv^\dagger\Mm\vv}d\vv&=\frac{\pi}{\det(\Mm)}&\int_\C |v_1|^2|v_2|^2 e^{-\vv^\dagger\Mm\vv}d\vv&=\frac{\pi^2 \Tr(\Mm)^2}{4\det(\Mm)^3}\text{ for }n=2
\end{align*}

\paragraph{Quantum Computation.} We assume the reader is familiar with the basics of quantum computation and quantum query models. Recall that in the standard model for making quantum queries to a classical oracle $\Os$, the algorithm submits a state $\sum_{x,y,z}\alpha_{x,y,z}|x,y,z\rangle$, and receives back $\sum_{x,y,z}\alpha_{x,y,z}|x,y\oplus \Os(x),z\rangle$. We will make use of the quantum Fourier tranform, denoted $\QFT_N$, defined on computational basis states as $|y\rangle\mapsto \frac{1}{\sqrt{N}}\sum_{z\in\Z_N}e^{i2\pi yz/N}|z\rangle$ for $y\in\Z_N$. We will usually drop the subscript $N$ as it will be clear from context. For a quantum state $|\psi\rangle$, we will typically let $|\hat{\psi}\rangle$ be shorthand for $\QFT|\psi\rangle$.

\subsection{Defining QMA and QCMA relative to Oracles}

We can consider two types of oracle versions of complexity classes, and in particular QMA/QCMA. The first, and most common, is to specify an infinite oracle $\Os:\{0,1\}^*\rightarrow\{0,1\}$, and define the classes QMA/QCMA relative to $\Os$. The second version, which is typically easier to work with, is to consider a variant of QMA/QCMA where the instance itself is an exponentially-sized oracle $\Xs:\{0,1\}^n\rightarrow\{0,1\}$. Fortunately, we show that a QMA/QCMA separation for one variant immediately gives such a separation for the other variant. Thus, it suffices to prove a separation for whichever variant is most convenient. 

\begin{definition}[Oracle-Aided QMA/QCMA]\label{def:qmaqcma} For a function $\Os:\{0,1\}^*\rightarrow\{0,1\}$, an oracle decision problem is a subset $\Language^\Os\subseteq\{0,1\}^*$ which depends on $\Os$. We say that $\Language^\Os$ is in $\QMA^\Os$ if there exists a polynomial-time oracle-aided quantum algorithm $V^\Os$ and polynomial $p$ such that:
	\begin{itemize}
		\item For every $x\in \Language^\Os$ of length $n$, there exists a state $|\psi\rangle$ on $p(n)$ qubits such that $\Pr[V^\Os(x,|\psi\rangle)=1]\geq 2/3$. In this case, we say that $V^\Os$ \emph{accepts} $x$.
		\item For every $x\notin \Language^\Os$ of length $n$, and for any state $|\psi\rangle$ on $p(n)$ qubits, $\Pr[V^\Os(x,|\psi\rangle)=1]\leq 1/3$. In this case, we say that $V^\Os$ \emph{rejects} $x$.
	\end{itemize}
	$\QCMA^\Os$ is defined analogously, except that the states $|\psi\rangle$ are required to be classical.
\end{definition}
\begin{definition}[Oracle-Input QMA/QCMA]\label{def:oiqmaqcma} Let $\Us=\{\Us_n\}_{n\in\Z^+}$ where $\Us_n$ are sets of strings $\Xs$ of length $2^{n^{\Theta(1)}}$, interpreted as functions from $\Xs:[n^{\Theta(1)}]\rightarrow\{0,1\}$. An oracle-input promise language is a subset $\OILanguage\subseteq \Us$. An oracle-input promise language $\OILanguage\subseteq \Us$ is in $\OIQMA$ if there exists a polynomial-time oracle-aided quantum algorithm $V$ and polynomial $p$ such that:
	\begin{itemize}
		\item For every $\Xs\in \OILanguage\cap \Us_n$, there exists a $|\psi\rangle$ on $p(n)$ qubits such that $\Pr[V^\Xs(|\psi\rangle)=1]\geq 2/3$. %In this case, we say that $V$ \emph{accepts} $\Xs$.
		\item For every $\Xs\in\Us_n\setminus\OILanguage$, and for any state $|\psi\rangle$ on $p(n)$ qubits, $\Pr[V^\Xs(|\psi\rangle)=1]\leq 1/3$. %In this case, we say that $V$ \emph{rejects} $\Xs$.
	\end{itemize}
	$\OIQCMA$ is defined analogously, except that the states $|\psi\rangle$ are required to be classical.
\end{definition}
Note that the constants $1/3,2/3$ in Definitions~\ref{def:qmaqcma} and~\ref{def:oiqmaqcma} are arbitrary, and can be replaced with $a,b$ for any $a,b$ such that $a\geq 2^{-\poly(n)},b\leq 1-2^{-\poly(n)},$ and $b-a\geq 1/\poly(n)$.

\medskip

Given any countable collection of oracles $\Os_1,\Os_2,\cdots$ which may have finite or infinite domains, we can construct a single oracle $\Os:\{0,1\}^*\rightarrow\{0,1\}$ by setting $\Os(j,x)=\Os_i(x)$, where $(j,x)$ is some encoding of pairs $(j,x)\in\{0,1\}^*\times\{0,1\}^*$ into strings in $\{0,1\}^*$. We can likewise convert $\Os$ back into $\Os_1,\Os_2,\cdots$. Therefore, we will take the definitions of $\QMA,\QCMA,\OIQMA,\OIQCMA$ to allow for verifiers making queries to countable collections of oracles.

This next theorem is proved in Section~\ref{sec:missproof} following a standard diagonalization argument, and shows that is suffices to give a separation for either variant of $\QMA,\QCMA$.
\begin{theorem}\label{thm:equiv}There exists a classical oracle $\Os$ such that $\QCMA^\Os\neq\QMA^\Os$ if and only if $\OIQCMA\neq\OIQMA$.
\end{theorem}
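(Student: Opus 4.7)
The proof is an iff, so I would handle the two directions separately, with the reverse direction being the substantive one.

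For the forward direction ($\QCMA^\Os \neq \QMA^\Os \Rightarrow \OIQCMA \neq \OIQMA$), the plan is a simple packaging. Let $\Language^\Os \in \QMA^\Os \setminus \QCMA^\Os$ with QMA verifier $V$ of running time $q(n)$. I would define $\Us_n$ to consist of strings $\Xs$ of length $2^{q(n)+O(\log n)}$ encoding a pair $(x, \Os')$, where $x \in \{0,1\}^n$ and $\Os'$ is an arbitrary oracle restricted to the queries $V(x)$ can make. Put $\Xs \in \OILanguage$ iff $x \in \Language^{\Os'}$. The OIQMA verifier reads $x$ from designated bits of $\Xs$ and runs $V(x, |\psi\rangle)$, routing each of $V$'s oracle queries to the corresponding location inside $\Xs$; the OIQMA witness is just the QMA witness. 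For the lower bound, any OIQCMA verifier $W$ yields a QCMA verifier $V^*(x, w)$ for $\Language^\Os$ by running $W$ on the virtual $\Xs$ encoding $(x, \Os)$, answering queries to $\Xs$-bits either from the known input $x$ or from actual queries to $\Os$; this contradicts $\Language^\Os \notin \QCMA^\Os$.

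For the reverse direction, the plan is a stagewise diagonalization. Enumerate all polynomial-time quantum verifiers with classical witnesses as $V_1^*, V_2^*, \ldots$ with runtime bounds $T_i = p_i(n_i)$. The oracle $\Os$ will be built slot-wise: slot $i$ is addressed by queries of the form $\langle i, s\rangle$ and encodes an instance $\Xs_i \in \Us_{n_i}$. Choose the lengths $n_1 < n_2 < \ldots$ to grow as a tower, with $n_{i+1}$ at least $2^{T_i}$, so that $V_i^*$ on input $1^{n_i}$ has too short a runtime to reach any query of the form $\langle j, \cdot\rangle$ for $j > i$ (its queries must fit in $T_i$ bits, but accessing slot $j > i$ requires at least $\log n_{i+1}>T_i$ bits), while $V_i^*$ can reach slots $j \leq i$. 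Define $\Language^\Os = \{1^{n_i} : \Xs_i \in \OILanguage\}$. Membership in $\QMA^\Os$ is immediate: on input $1^{n_i}$, query slot $i$ of $\Os$ to simulate the OIQMA verifier $V_{\text{OI}}^{\Xs_i}$ with the OIQMA witness.

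At stage $i$, the key step is to choose $\Xs_i$ so that $V_i^*(1^{n_i})$ is incorrect about ``$\Xs_i \in \OILanguage$'' when the oracle has $\Xs_i$ at slot $i$, the previously-committed $\Xs_j$ at slots $j < i$, and zero (or anything beyond $V_i^*$'s reach) elsewhere. I would argue such $\Xs_i$ exists by contradiction: if every choice of $\Xs_i$ left $V_i^*$ correct, then the algorithm $W^\Xs(w)$ that runs $V_i^*(1^{n_i}, w)$, answering slot-$i$ queries via $\Xs$, slot-$j<i$ queries from hardcoded $\Xs_j$'s, and all further queries by $0$, would be a polynomial-time OIQCMA verifier correctly deciding $\OILanguage$ on $\Us_{n_i}$. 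Combining the $W_i$ from each stage (via the standard ``switch on input length'' trick and the fact that at stage $i$ the hardcoded prior instances have total size $\sum_{j<i} 2^{n_j^{O(1)}}=\poly(n_i)$ by the tower spacing) yields a single uniform OIQCMA verifier for $\OILanguage$, contradicting $\OILanguage \notin \OIQCMA$. Finally, because each stage locks in an $\Xs_i$ that makes $V_i^*$ wrong on $1^{n_i}$, and every QCMA verifier appears as some $V_i^*$, we conclude $\Language^\Os \notin \QCMA^\Os$.

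The main obstacle I anticipate is making the ``piecing together'' in the contradiction step rigorously uniform: the naive $W_i$ depends on stage $i$ via the hardcoded prior instances, so it is not a single algorithm for all $n$. I would resolve this by using the tower spacing to ensure the hardcoding is always polynomial in the current length, and by describing $W$ so that on input length $n$ it looks up which stage $n = n_i$ belongs to (all $n_j$'s are themselves polynomially describable by the recursion), pulls in the previously-committed oracle slot contents from a fixed efficient description, and runs the appropriate $V_i^*$; this is a routine but finicky uniform simulation that closely parallels the Baker--Gill--Solovay style constructions in classical oracle separations.
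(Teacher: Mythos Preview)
Your forward direction is essentially the paper's argument.

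Your reverse direction has the right overall shape (stagewise diagonalization, defeating the $i$th candidate verifier at input $1^{n_i}$), but the contradiction step at stage $i$ has a genuine gap, and the obstacle you flag is not the real one. You fix $n_i$ in advance via the tower and then argue: if $V_i^*$ is correct on every $\Xs\in\Us_{n_i}$, combine the resulting $W_i$ across stages into a single $\OIQCMA$ verifier for $\OILanguage$. This fails on two counts. First, the statement you need is ``at \emph{each} stage $i$ a bad $\Xs_i$ exists,'' whose negation is ``at \emph{some} stage $i$ no bad $\Xs_i$ exists''; at that single failing stage you only have $W_i$, not a family to stitch (at earlier stages $j<i$ you \emph{did} find bad $\Xs_j$, so $V_j^*$ is not correct on $\Us_{n_j}$ and there is no $W_j$). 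Second, even granting a $W_i$ for every $i$, the stitched verifier would only handle input lengths on the tower $\{n_1,n_2,\ldots\}$; it says nothing about $\Us_n$ for $n\notin\{n_i\}$, so it does not decide $\OILanguage$ and no contradiction with $\OILanguage\notin\OIQCMA$ follows. Your proposed ``fix'' addresses uniformity of the stitched $W$, but uniformity is not the problem; coverage is.

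The paper sidesteps this by not committing to $n_i$ before examining $V_i^*$. At stage $i$, it forms the single algorithm $V_i^{\Xs}(w):=T_i^{\Os_{i-1},\Xs}(w)$ with the already-committed oracle portion $\Os_{i-1}$ hardcoded; this $V_i$ is itself a candidate $\OIQCMA$ verifier for $\OILanguage$. Since $\OILanguage\notin\OIQCMA$, $V_i$ is wrong on some $(n,\Xs)$. If $V_i$ were wrong only for $n$ below the spacing threshold, then because each $\Us_n$ is a finite set of strings one could hardcode those finitely many exceptions and obtain a correct $\OIQCMA$ verifier, a contradiction. Hence $V_i$ must fail at arbitrarily large $n$, so one can choose $n_i$ large enough for the spacing \emph{and} pick a failing $\Xs_{n_i}\in\Us_{n_i}$. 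The contradiction lives entirely inside stage $i$, against the single machine $V_i$; no cross-stage combining is needed.
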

The style of oracle separation between $\QMA$ and $\QCMA$ given in~\cite{CCC:AarKup07} follows that of Definition~\ref{def:qmaqcma}, except that they use a quantum oracle instead of a classical oracle. They do not define the oracle-input versions of $\OIQMA$ and $\OIQCMA$ or a general result like Theorem~\ref{thm:equiv}, but their proof implicitly uses similar concepts. In particular, they first define a universe of quantum oracles: namely, those that are either the identity or reflect around a Haar random state. They then show that the two cases can be efficiently distinguished via a quantum witness, but not a classical witness. This is effectively showing a separation between $\OIQMA$ and $\OIQCMA$, except using quantum oracles instead of classical oracles. They then extend this to give a quantum oracle $\Qs$ separating $\QMA^\Qs$ from $\QCMA^\Qs$. This can be seen as roughly corresponding to a transformation like what we prove in Theorem~\ref{thm:equiv}. 

One slight non-triviality in generalizing their techniques to give a general equivalence is that a separation between $\OIQMA$ and $\OIQCMA$ only needs to show that, for any potential $\QCMA$ verifier, there is \emph{some} instance that the verifier answers incorrectly. This is indeed how our separation is constructed. In contrast,~\cite{CCC:AarKup07} show that almost all instances will fool any $\QCMA$ verifier. This stronger separation is crucially used when constructing the oracle $\Qs$ separating $\QMA^\Qs$ from $\QCMA^\Qs$, as they can basically choose $\Qs$ to be random from the appropriate universe of oracles. In order to get a general equivalence for arbitrary separations, including those like ours where the instance depends on the verifier, we have to work a bit harder and incorporate a diagonalization argument. Fortunately, this is standard.

\subsection{Some Useful Quantum Lemmas}

For an oracle algorithm $A$ making queries to an oracle $O$ and an oracle input $r$, let $\sum_{x,y,z}\alpha_{x,y,z}^{(i)}|x,y,z\rangle$ denote the $i$th oracle query, and let $M_x(i)=\sum_{y,z}|\alpha_{x,y,z}^{(i)}|^2$, which we will call the {\bf query mass} of $x$ in the $i$-th query.  Let $M_x=\sum_i M_x(i)$, the total query mass of $x$ over all $q$ queries, and for a subset $V$ let $M_V=\sum_{x\in V}M_x$ be the total query mass of points in $V$.

\begin{lemma}[\cite{BBBV97} Theorem 3.1+3.3]\label{lem:BBBV} Let $A$ be a quantum algorithm running making $q$ queries to an oracle $O$. Let $\epsilon > 0$ and let $V$ be a set of inputs. If we modify $O$ into an oracle $O'$ which is identical to $O$ except possibly on inputs in $V$, then $|\Pr[A^O()=1]-\Pr[A^{O'}()=1]|\leq 4\sqrt{q M_V}$.
\end{lemma}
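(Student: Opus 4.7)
The plan is to run a standard hybrid argument that tracks the Euclidean distance between the state of $A$ when interacting with $O$ versus $O'$, bounding the per-query increase in this distance by the query mass on $V$, and then converting the final state-vector distance into a bound on the difference of acceptance probabilities. This is the classical BBBV argument, and the key point is that $O$ and $O'$ differ only on basis states whose query register lies in $V$, so any ``error'' introduced by swapping one for the other at step $i$ is controlled by $M_V(i)$.

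First I would fix notation: let $U_1,\dots,U_q$ be the algorithm's own unitaries (independent of the oracle) and let $Q_O$ denote the query unitary that maps $|x,y,z\rangle \mapsto |x, y \oplus O(x), z\rangle$. Define $|\phi_i\rangle$ (resp.\ $|\phi_i'\rangle$) to be the state after the first $i$ queries when interacting with $O$ (resp.\ $O'$), both started from the same initial state, so that $|\phi_0\rangle = |\phi_0'\rangle$. Let $\Pi_V$ be the orthogonal projector onto the span of $\{|x,y,z\rangle : x \in V\}$; by construction $M_V(i) = \|\Pi_V |\phi_{i-1}\rangle\|^2$ using the intermediate pre-query state, and $Q_O$ and $Q_{O'}$ act identically on the range of $I - \Pi_V$.

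Second, I would establish the per-step bound. Because $Q_O - Q_{O'}$ vanishes on $(I-\Pi_V)$-vectors and each of $Q_O,Q_{O'}$ is unitary (so their difference has operator norm at most $2$), we get
\[
\|(Q_O - Q_{O'}) |\phi\rangle\| \;=\; \|(Q_O - Q_{O'}) \Pi_V|\phi\rangle\| \;\leq\; 2 \|\Pi_V |\phi\rangle\|.
\]
Using this together with the triangle inequality and the fact that the non-query unitaries $U_i$ preserve Euclidean distance, an induction on $i$ yields
\[
\bigl\| |\phi_i\rangle - |\phi_i'\rangle \bigr\| \;\leq\; 2 \sum_{j=1}^{i} \sqrt{M_V(j)}.
\]
Applying Cauchy--Schwarz over the $q$ queries then gives $\||\phi_q\rangle - |\phi_q'\rangle\| \leq 2\sqrt{q \sum_j M_V(j)} = 2\sqrt{q M_V}$. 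Finally, for any two-outcome projective measurement (such as the accept/reject measurement on the output register), the difference of probabilities is at most $2$ times the Euclidean distance of the states, yielding $|\Pr[A^O() = 1] - \Pr[A^{O'}() = 1]| \leq 4\sqrt{q M_V}$.

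The only subtle step is the per-query bound $\|(Q_O - Q_{O'}) |\phi\rangle\| \leq 2 \|\Pi_V |\phi\rangle\|$; the naive operator-norm bound would give $2\||\phi\rangle\| = 2$ and lose everything. The crucial observation, which I would emphasize, is that $Q_O - Q_{O'}$ is supported on the range of $\Pi_V$ because $O$ and $O'$ agree off $V$, which is exactly what lets us replace $\|\phi\|$ by $\|\Pi_V \phi\| = \sqrt{M_V(i)}$. Once that is in place, the rest is a mechanical application of the triangle inequality, Cauchy--Schwarz, and the standard conversion between trace-distance-type bounds and measurement outcome probabilities.
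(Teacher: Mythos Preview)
Your proof is correct and is precisely the standard BBBV hybrid argument. The paper itself does not supply a proof of this lemma; it simply quotes the result from \cite{BBBV97} (Theorems~3.1 and~3.3) as a preliminary tool, so there is nothing to compare against beyond noting that your write-up faithfully reproduces the original argument.
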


\begin{lemma}[\cite{C:Zhandry12} Theorem 3.1]\label{lem:kwiseindep} Let $A$ be a quantum algorithm making $q$ queries to an oracle $O$. For any output $z$, the probability $A$ outputs $z$ when $O$ is $2q$-wise independent is identical to the probability $A$ outputs $z$ when $O$ is uniformly random.\end{lemma}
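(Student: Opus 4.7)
The plan is to use the polynomial method for quantum query algorithms. First I would fix the quantum algorithm $A$, the number of queries $q$, and the output $z$, and argue that the probability $p_z(O) := \Pr[A^O \text{ outputs } z]$ is a low-degree polynomial in the ``truth table'' of $O$. Concretely, for each input $x$ in the domain and each possible output value $v$ in the range, introduce an indicator variable $\chi_{x,v} = \mathbb{1}[O(x)=v]$; the oracle unitary on the basis state $|x,y,z\rangle$ can be written as $\sum_v \chi_{x,v} |x, y+v, z\rangle$, which is degree $1$ in these indicator variables. All other unitaries applied by $A$ are independent of $O$.

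Next I would track amplitudes. Starting from the fixed initial state and unrolling the computation $(U_q \cdot O \cdot U_{q-1} \cdots O \cdot U_0)|\psi_0\rangle$, the final amplitude on any basis state is a polynomial of degree at most $q$ in the $\chi_{x,v}$ (each of the $q$ oracle calls contributes degree $1$). Consequently, the probability of outputting $z$, obtained by summing $|\cdot|^2$ of amplitudes over the accepting basis states, is a polynomial of degree at most $2q$ in the $\chi_{x,v}$.

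The key step is then to observe that the expectation of any polynomial of degree at most $2q$ in the variables $\{\chi_{x,v}\}_{x,v}$ depends only on the joint distribution of $O(x)$ for any $2q$ inputs $x$ at a time: every monomial of degree at most $2q$ involves indicator variables for at most $2q$ distinct inputs (several factors $\chi_{x,v}, \chi_{x,v'}$ for the same $x$ either vanish or reduce to a single $\chi_{x,v}$ via $\chi_{x,v}\chi_{x,v'} = \delta_{v,v'}\chi_{x,v}$). Therefore $\mathbb{E}_O[p_z(O)]$ is determined by the $2q$-wise marginals of $O$, which by hypothesis agree with those of the uniform distribution. Taking expectations on both sides gives $\Pr[A^{O_{2q\text{-wise}}}=z] = \Pr[A^{O_{\text{unif}}}=z]$.

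The main subtlety I would expect is the bookkeeping around non-binary output alphabets: one must argue carefully that using indicator variables $\chi_{x,v}$ (rather than, say, treating each bit of $O(x)$ independently) is the ``right'' notion, because $2q$-wise independence of $O$ is a statement about $2q$-tuples of inputs $x$, not about individual output bits. The reduction $\chi_{x,v}\chi_{x,v'} = \delta_{v,v'}\chi_{x,v}$ and the constraint $\sum_v \chi_{x,v}=1$ let me collapse any monomial to one that mentions each input $x$ at most once, preserving the degree bound and making the matching with $2q$-wise independence clean. Modulo this encoding choice, the result falls out of the standard polynomial method.
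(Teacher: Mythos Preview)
The paper does not actually prove this lemma; it is merely cited as Theorem~3.1 of \cite{C:Zhandry12} and used as a black box. Your proposal is correct and is essentially the standard polynomial-method argument by which the result is proved in the cited reference: encode the oracle via indicator variables, observe that amplitudes after $q$ queries are degree-$q$ polynomials so output probabilities are degree-$2q$ polynomials, and note that the expectation of any degree-$2q$ monomial depends only on the $2q$-wise marginals of $O$. Your handling of the non-binary alphabet via the relations $\chi_{x,v}\chi_{x,v'}=\delta_{v,v'}\chi_{x,v}$ is the right bookkeeping and matches the original proof.
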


\begin{lemma}[\cite{HamMag23} Theorem 5.5]\label{lem:multi} Let $p\in [0, 1]$. There is a constant $C\leq 48e$ such that the following is true. The success probability of finding $K$ marked items in a random function $S : [N] \rightarrow \{0,1\}$ where $S(x) = 1$ with probability $p$ for each $x\in[N]$ is at most $(Cp(Q/K)^2)^K$ for any algorithm making $Q \geq K$ quantum queries to $S$.
\end{lemma}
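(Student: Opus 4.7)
The plan is to prove Lemma~\ref{lem:multi} via the compressed (phase) oracle framework adapted to Bernoulli-$p$ random oracles. First, I would purify $S$ by attaching to each $x\in[N]$ a qubit ancilla in the state $\sqrt{1-p}|0\rangle + \sqrt{p}|1\rangle$, so that a standard query to $S(x)$ becomes a coherent read of the $x$-th ancilla. After $Q$ queries, at most $Q$ ancillas can have been nontrivially touched, and passing to the compressed representation yields a superposition over ``databases'' $D$ of pairs $(x,b)\in[N]\times\{0,1\}$ with $|D|\le Q$.

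The central quantity is $p_k(i) = \|\Pi_{\ge k}|\phi_i\rangle\|^2$, the squared-norm at time $i$ on databases containing at least $k$ marked entries (i.e.\ pairs of the form $(x,1)$). The core technical step is a per-query transition bound
\[
\sqrt{p_k(i+1)} \;\le\; \sqrt{p_k(i)} \;+\; C_0\sqrt{p}\cdot\sqrt{p_{k-1}(i)},
\]
for some absolute constant $C_0$. I would establish this by decomposing a single compressed query into a ``status-quo'' component that preserves the number of marked entries, plus a ``grow'' component that can add a new marked pair to $D$, whose amplitude is bounded by $O(\sqrt{p})$ because a freshly sampled ancilla has weight $\sqrt{p}$ on the value $1$. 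Iterating the recurrence from $p_k(0)=0$ for $k\ge 1$, the quantity $\sqrt{p_K(Q)}$ is dominated by the number of length-$K$ ascent sequences of query indices in $\{1,\ldots,Q\}$ times $(C_0\sqrt{p})^K$; there are at most $\binom{Q}{K}$ such sequences, so
\[
p_K(Q) \;\le\; \binom{Q}{K}^{2}(C_0 p)^K \;\le\; \bigl(C_1\, p\,(Q/K)^2\bigr)^K
\]
for an absolute constant $C_1$.

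To connect this back to the original task, I would argue via a recording-versus-reality comparison that any algorithm outputting $K$ distinct points $x_1,\ldots,x_K$ that are actually all marked in the real $S$ must drive the joint compressed state onto databases with $w(D)\ge K$, up to an error controlled by Lemma~\ref{lem:BBBV}; intuitively, a position $x_j$ not present in $D$ cannot be certified as marked with better than the background probability $p$, since the corresponding purified ancilla retains only $\sqrt{p}$ amplitude on $1$. This shows the success probability is upper bounded (up to constants) by $p_K(Q)$, and tracking all constants through the purification, compression, and comparison steps should yield $C\le 48e$ as stated.

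The main obstacle is making the per-query transition bound tight in the biased ($p\ne 1/2$) setting, rather than the uniform case where the compressed oracle is cleanest. Care is needed to control interference between the status-quo and grow components so that the $\sqrt{p}$-per-step accounting is not lossy, and to prevent constructive interference across queries from inflating $C_1$ beyond the target $48e$. A secondary difficulty is the recording-versus-reality step: one must show that the ``compressed-consistent'' success probability approximates the true success probability up to an error that can be absorbed into the recursion, which requires pairing the BBBV-type error term with $p_{K-1}(Q)$ rather than a crude trivial bound, lest the final estimate degrade to a weaker polynomial tail.
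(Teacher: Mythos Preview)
The paper does not prove Lemma~\ref{lem:multi}; it is quoted verbatim as Theorem~5.5 of \cite{HamMag23} and used as a black box. So there is no ``paper's own proof'' to compare against here.

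Your sketch is, however, an accurate outline of how \cite{HamMag23} actually establish the result: they build a compressed-oracle framework for the biased (Bernoulli-$p$) random function, track the amplitude on databases with at least $k$ marked entries, and derive exactly the kind of one-step progress bound you wrote, whose iteration produces the $\binom{Q}{K}^2 p^K$ term. Two small points where your description drifts from what is actually done: (i) the final ``recording-versus-reality'' step is not really a BBBV argument but a direct analysis inside the compressed representation, comparing the output register to the recorded database and using that an unrecorded coordinate is still in the unbiased purified state; invoking Lemma~\ref{lem:BBBV} here would be both circular in spirit and lossier than needed. (ii) The constant $48e$ is not something you should expect to recover by ``tracking constants'' through a generic compressed-oracle argument; in \cite{HamMag23} it falls out of their specific normalization of the biased recording isometry and the exact form of the progress lemma, so if you want that constant you must follow their bookkeeping rather than a qualitative sketch. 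For the purposes of this paper, though, none of that matters: the lemma is imported, not reproved.
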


\section{Our Oracle Separation}\label{sec:conj}

Here, we give our conjectured oracle separation between $\QMA$ and $\QCMA$. Following Theorem~\ref{thm:equiv}, it suffices to focus on the oracle-input variants of $\QMA$ and $\QCMA$. We first define our new statistical conjecture. Then we will define a certain ``nice'' type of distribution, which we call Fourier Independent (FI). We show that our statistical conjecture leads to the existence of such a FI distribution. Finally, we show that an FI distribution gives a separation between $\OIQMA$ and $\OIQCMA$, and hence an oracle $\Os$ such that $\QCMA^\Os\neq\QMA^\Os$ (via Theorem~\ref{thm:equiv}).

\subsection{Our Statistical Conjecture}

\begin{definition}[Substitution Distance] Consider two distributions $X_0,X_1$ over $\Sigma^\ell$ for some alphabet $\Sigma$. The \emph{substitution distance} between $X_0$ and $X_1$, denoted $\|X_0- X_1\|_{\sf sub}$, is the minimum value of $\epsilon\geq 0$ such that there is a joint distribution $(Z_0,Z_1)$ over $(\Sigma^\ell)^2$ where:
	\begin{itemize}
		\item The marginal distribution $Z_0$ is equal to $X_0$ and the marginal distribution $Z_1$ is equal to $X_1$.
		\item For each $i\in[\ell]$, $\Pr[Z_{0,i}\neq Z_{1,i}]\leq\epsilon$, where $Z_{b,i}$ means the $i$th entry of $Z_b$.
	\end{itemize}
\end{definition}
Intuitively, a small $\|X_0- X_1\|_{\sf sub}$ means that, by changing a few positions $X_0$ -- and each position with tiny probability -- we can turn $X_0$ into $X_1$.

\begin{conjecture}\label{conj:stat} There exist functions $r(N),k(N),\zeta(N),\eta(N)$ subject to the constraints $k(N)\geq (\log N)^{\omega(1)}$, $\eta(N)\leq (\log N)^{-\omega(1)}$, and $N\zeta(N)^3/r(N)^6\geq (\log N)^{\omega(1)}$  such that the following is true. Suppose $X_0$ is a distribution over $\{0,1\}^N$ such that for all sets $T\subseteq [N]$ of size of size at most $r(N)$, $2^{-|T|}\leq \Pr_{\xv\gets X_0}[x_i=0\forall i\in T]\leq (1+\zeta(N))\times 2^{-|T|}$. Then there exists a $k(N)$-wise uniform independent distribution $X_1$ such that $\|X_0-X_1\|_{\sf sub}\leq \eta(N)$.
\end{conjecture}

Think of $\log N$ as the instance size, so that $N$ is exponential. Conjecture~\ref{conj:stat} starts with a distribution $X_0$ that is in some sense very close to $r$-wise uniform independent, and concludes that $X_0$ must be negligibly-close in substitution error to an actual $k$-wise uniform independent distribution, for $k$ that may be different than $r$ but is still super-polynomial. Note that without the constraint $N\zeta(N)^3/r(N)^6\geq (\log N)^{\omega(1)}$, the conjecture is trivially true by setting $\zeta=0$ and $X_0=X_1$. The exact constraint we stipulate makes the conjecture non-trivial, and arises for technical reasons in our separation.

\subsection{Fourier Independent Distributions}

\begin{definition}\label{def:maindist} Let $N\in\Z^+$ and $S\subseteq [N]$. We say that a distribution $\Ds_S$ over pairs $(|\psi\rangle,U)$ is
$(k,\delta,\gamma)$-Fourier-Independent (FI) if the following hold:
\begin{itemize}
	\item $|\psi\rangle$ is a normalized superposition $\sum_{y\in S}\alpha_y|y\rangle$ over elements $y\in S$. 
	\item $U\subseteq[N]$, and the marginal distribution of $U$ is $k$-wise uniform independent.
	\item Let $|\hat{\psi}\rangle=\QFT|\psi\rangle$ be the quantum Fourier transform of $|\psi\rangle$. Then except with probability $\delta$ over the choice of $|\psi\rangle$, $\langle \hat{\psi}|\Pi_U |\hat{\psi}\rangle\geq \frac{1}{2}+\gamma$, where $\Pi_U$ is the projection operator $\sum_{z\in U}|z\rangle\langle z|$.

\end{itemize}
\end{definition}

We want $k$ to be large (say super-polynomial in $\log N$), $\delta$ to be small (say negligible in $\log N$), and $\gamma$ to be not-too-small (non-negligible in $\log N$). Then a Fourier Independent distribution is one where (1) $|\psi\rangle$ has support on $S$, (2) $U$ looks like a random set to query-bounded quantum algorithms (by Lemma~\ref{lem:kwiseindep}), but (3) $|\hat{\psi}\rangle$ is biased toward elements in $U$.

We now show the existence of a FI distribution for certain $k,\delta,\gamma$, assuming Conjecture~\ref{conj:stat}.

\begin{theorem}\label{thm:fi} Suppose Conjecture~\ref{conj:stat} holds. Then there exists functions $p,\epsilon,\delta,\gamma:\Z^+\rightarrow[0,1]$ and functions $N,k:\Z^+\rightarrow\Z^+$ such that (1) $p(n),\epsilon(n),\delta(n)\leq n^{-\omega(1)}$, (2) $N(n)\leq 2^{n^{O(1)}}$ (3) $k(n)\geq n^{\omega(1)}$, (4) $\gamma\geq n^{-O(1)}$, and (5) except with probability $\epsilon(n)$ over the choice of $S\gets\Bernoulli_{p(n)}^{N(n)}$, there exists a distribution $\Ds_S$ that is $(k,\delta,\gamma)$-Fourier-Independent.
\end{theorem}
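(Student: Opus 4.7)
The plan is to construct $\Ds_S$ in two stages. First, I define a ``pre-$U$'', call it $U_0$, by thresholding the Fourier transform of a randomly chosen state on $S$, and show that the conditional distribution of $U_0$ given $S$ satisfies the multiplicative bias hypothesis of Conjecture~\ref{conj:stat}. Then I invoke the conjecture to replace $U_0$ by a truly $k$-wise uniform independent $U$ that is close in substitution distance, and argue the Fourier mass condition transfers. Throughout, I set parameters so that $|S|\approx pN$ lies in the window $[n^{\omega(1)},\, N^{1/3}/n^{\omega(1)}]$, e.g.\ $N(n)=2^{n^2}$ and $p(n)=N^{-2/3}$ so $|S|\approx N^{1/3}$, with $r(n)$ any super-polynomial bounded above by $\sqrt{|S|}/n^{\omega(1)}$.

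For the first stage, given typical $S$, I sample $|\psi\rangle=\sum_{y\in S}\alpha_y|y\rangle$ by drawing $\alpha_y\sim\Ns^\C_{0,1}$ iid for $y\in S$ and normalizing, then set $U_0=\{z:|\hat\alpha_z|^2\geq(\ln 2)/N\}$. Conditioning on $S$, the vector $(\hat\alpha_z)_{z\in T}$ is complex Gaussian with covariance $(1/N)(I+E_T)$, where the diagonal of $E_T$ vanishes (forced by $\E[|\hat\alpha_z|^2\mid S]=1/N$, which follows from Parseval plus symmetry in $z$) and the off-diagonal at $(z,z')$ is $\frac{1}{|S|}\sum_{y\in S}e^{i2\pi y(z-z')/N}$. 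A CLT-style bound with a union bound over pairs shows that except with probability $O(1/N)$ over $S$, every entry of $E_T$ is $O(\sqrt{\log N/|S|})$, hence $\|E_T\|_F^2=O(r^2\log N/|S|)$ uniformly in $T$. Taylor expanding the Gaussian integral $\Pr[\,|\hat\alpha_z|^2<(\ln 2)/N\,\forall z\in T\mid S]$ in $E_T$, the $O(\|E_T\|)$ terms drop out because $\Tr(E_T)=0$ and the region $\{|u_i|^2<\ln 2\}$ is phase-symmetric in each $u_i$; a Wick contraction using the complex Gaussian identities from the preliminaries yields a second-order coefficient of $\tfrac{(\ln 2)^2}{2}\|E_T\|_F^2$. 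This gives $\zeta=O(r^2\log N/|S|)$, and the conjecture's constraint $N\zeta^3/r^6\geq n^{\omega(1)}$ reduces to $|S|^3\leq N/n^{\omega(1)}$, which holds in our window.

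For the second stage, Conjecture~\ref{conj:stat} supplies a $k$-wise uniform independent $X_1$ with $\|X_0-X_1\|_{\sf sub}\leq\eta$, where $X_0$ is the conditional law of $U_0$ given $S$. I define $\Ds_S$ by sampling $(U_0,U)$ from the optimal substitution coupling of $(X_0,X_1)$, then sampling $|\psi\rangle$ from its conditional law given $U_0$, and outputting $(|\psi\rangle,U)$; the marginal of $U$ is $k$-wise uniform independent by construction. For the Fourier mass, since $N|\hat\alpha_z|^2$ is approximately exponential with rate $1$, the expected value of $\sum_{z\in U_0}|\hat\alpha_z|^2$ is $\int_{\ln 2}^\infty s e^{-s}\,ds=(1+\ln 2)/2\approx 0.847$, and the same weak-correlation estimates that controlled $\|E_T\|_F$ give concentration, so $\langle\hat\psi|\Pi_{U_0}|\hat\psi\rangle\geq 1/2+\gamma'$ with all but negligible probability, for some $\gamma'=\Omega(1)$. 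Switching from $U_0$ to $U$ shifts the mass by at most $\sum_{z\in U_0\triangle U}|\hat\alpha_z|^2$; Cauchy--Schwarz on the expectation gives $\E\bigl[\sum_{z\in U_0\triangle U}|\hat\alpha_z|^2\bigr]\leq N\sqrt{\E|\hat\alpha_z|^4}\sqrt{\eta}=O(\sqrt\eta)$, and Markov's inequality yields $\gamma=\gamma'-\eta^{1/4}=\Omega(1)$ with probability $1-\eta^{1/4}$, as needed.

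The main obstacle is the first-stage multiplicative bias bound: establishing both sides of $2^{-|T|}\leq\Pr[T\cap U_0=\emptyset\mid S]\leq(1+\zeta)2^{-|T|}$ with a bias $\zeta$ scaling as $\|E_T\|_F^2$ rather than $\|E_T\|_F$, uniformly over all $T$ of size $\leq r$ and almost all $S$. The quadratic scaling is essential to meet the conjecture's constraint $N\zeta^3/r^6\geq n^{\omega(1)}$, and it rests on two cancellations --- the vanishing diagonal of $E_T$ from normalization and the phase-symmetry of the ``all small'' region --- that must survive a union bound over exponentially many $T$. The lower bound $\geq 2^{-|T|}$ is also non-trivial: the positive second-order coefficient $\tfrac{(\ln 2)^2}{2}\|E_T\|_F^2$ must dominate the $O(\|E_T\|_F^3)$ Taylor remainder, or one must appeal to a Gaussian correlation inequality (e.g.\ Royen's theorem applied to the real/imaginary parts of $u$, viewing $\{|u_i|<\sqrt{\ln 2}\}$ as a symmetric convex body) to obtain the lower bound unconditionally.
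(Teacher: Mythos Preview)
Your proposal is essentially correct and reaches the same conclusion, but the construction differs in one important design choice. You build $U_0$ by a \emph{hard threshold} on $|\hat\alpha_z|^2$, whereas the paper uses a \emph{soft randomized} rule: given $|\psi\rangle$, put $z$ into $U'$ independently with probability $1-e^{-|\beta_z|^2}$, where $\beta_z=\sum_{y\in S}e^{i2\pi yz/N}\alpha_y$ and the $\alpha_y$ are unnormalized complex Gaussians of width $1/\sqrt{|S|}$. The payoff of the soft rule is a closed form: integrating the Gaussian yields $\tau^S_T:=\Pr[T\cap U'=\emptyset]=1/\det\bigl(I+\tfrac{N}{|S|}M^S_T\bigr)$ exactly, where $M^S_T$ is the $T\times T$ principal submatrix of $\QFT^\dagger\Pi_S\QFT$. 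Since this matrix is PSD with all diagonal entries equal to $2$, Hadamard's inequality gives $\det\leq 2^{|T|}$ and hence $\tau^S_T\geq 2^{-|T|}$ unconditionally for every $S,T$, while Gershgorin on the off-diagonal entries (bounded by $O(\sqrt{\log N/|S|})$ for typical $S$, exactly as you observe) gives $\tau^S_T\leq 2^{-|T|}(1+2|T|^2\epsilon)$. The Fourier-mass lemma then falls out of the same determinant machinery, with expected mass $3/4$ rather than your $(1+\ln 2)/2$. So the paper trades your Taylor expansion and the Gaussian correlation inequality for two lines of linear algebra.

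Your route works, but two points deserve tightening. First, once you normalize $|\psi\rangle$ the vector $(\hat\alpha_z)_{z\in T}$ is no longer Gaussian, so $U_0$ must be defined from the \emph{unnormalized} Gaussians (with threshold $(\ln 2)|S|/N$), and the normalized state output separately; as written this is ambiguous. Second, the exact lower bound $\Pr[T\cap U_0=\emptyset\mid S]\geq 2^{-|T|}$ really does require Royen's theorem (viewing $\{|u_i|<\sqrt{\ln 2}\}$ as symmetric convex bodies in $\R^{2|T|}$, with each marginal probability exactly $1/2$ because the diagonal of the covariance is constant); the Taylor-remainder alternative you mention cannot give an exact inequality. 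Finally, note that $r(N),\zeta(N)$ are \emph{supplied} by Conjecture~\ref{conj:stat}, not chosen by you: you must set $|S|$ (equivalently $p$) large enough that your bias $O(r^2\log N/|S|)$ is at most the conjecture's $\zeta(N)$, and the constraint $N\zeta^3/r^6\geq(\log N)^{\omega(1)}$ is precisely what guarantees such $|S|$ can be taken with $p$ negligible. Your fixed choice $|S|\approx N^{1/3}$ happens to work but should be justified against the given $r,\zeta$ rather than presented as a free choice.
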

\begin{proof}Let $N,\ell\in\Z^+$ be positive integers. For a subset $S\subseteq\Z_N$ of size $\ell$, let $\Ds_{S}'$ be the distribution over pairs $(|\psi\rangle,U')$ where $|\psi\rangle$ is a quantum state and $U'\subseteq\Z_N$, obtained as follows:
	\begin{itemize}
		\item For each $y\in S$, sample a complex number $\alpha_y\gets \Ns^\C(0,\sigma)$ where $\sigma=1/\sqrt{\ell}$. Let $|\psi\rangle=\sum_{y\in S}\alpha_y |y\rangle$
		\item For each $z\in \Z_N$, let $\beta_z=\sum_{y\in S}e^{i2\pi yz/N}\alpha_y$.
		\item For each $z\in\Z_N$, place $z$ into the output set $U'$ with independent probability $1-e^{-|\beta_z|^2}$.
	\end{itemize}
	
\paragraph{Intuition:} By choosing $\sigma=1/\sqrt{\ell}$, we will show that $|\psi\rangle$ is approximately normalized, so we can think of $|\psi\rangle$ as being essentially a random state with support on $S$. Let $|\hat{\psi}\rangle=\QFT|\psi\rangle$ be the QFT applied to $|\psi\rangle$. Then $|\hat{\psi}\rangle=\sum_{z\in\Z_N}(\beta_z/\sqrt{N})|z\rangle$. Thus, the set $U'$ will be biased toward containing the points $z$ where $\beta_z$ is large.  We will also show that $U'$ is close to $k$-wise independent in substitution distance, for an appropriate $k$. Via Conjecture~\ref{conj:stat}, this allows us to replace $U'$ with an appropriate $U$, giving a distribution $\Ds_S$ that is truly Fourier Independent.
	
We state some useful lemmas about $\Ds_S'$; the proofs will be given in Section~\ref{sec:missproof}. Define $\tau^S_T:=\Pr_{U'\gets\Ds_S}[T\cap U'=\emptyset]$, or equivalently, the probability $U'(z)=0$ for all $z\in T$, once $S$ is chosen.
	
\begin{lemma}\label{lem:detupperworst} For any subsets $S,T$, $1+|T|\leq (\tau^S_T)^{-1}\leq 2^{|T|}$\end{lemma}
	
\noindent Lemma~\ref{lem:detupperworst} holds for any $S,T$. In particular, for $|T|=1$, we have that $\tau^S_T=1/2$. This means each $z$ is placed in $U$ with probability $1/2$; by linearity of expectation, $\E[|U'|]=N/2$. We can also get a tighter lower-bound for larger $T$ with high probability over the choice of $S$:
	
\begin{lemma}\label{lem:detupperavg} For any $\epsilon>0$, except with probability at most $2N^2 e^{-\epsilon^2 \ell/2}$ over the choice of random subset $S$ of size $\ell$, for all subsets $T$, $(\tau^S_T)^{-1}\geq 2^{|T|}(1-|T|^2\epsilon)$. In this event, as long as $|T|^2\epsilon\leq 1/2$, we can bound $\tau^S_T\leq 2^{-|T|}(1+2|T|^2\epsilon)$.
\end{lemma}

\noindent Now we bound $\| \;|\psi\rangle\; \|^2$, showing that $|\psi\rangle$ is approximately normalized.
\begin{lemma}\label{lem:boundnorm}Fix any set $S$ of size $\ell$ and $\epsilon\in(0,1)$. Then except with probability $2e^{-\epsilon^2 \ell/8}$, $\|\;|\psi\rangle\; \|^2\in[1-\epsilon,1+\epsilon]$, where $|\psi\rangle$ is generated as in $\Ds'_S$.
\end{lemma}

\noindent Now, we bound the probability that (the normalization of) $|\hat{\psi}\rangle$ is accepted by $U'$. Note that $| \;|\psi\rangle\; |^2=| \;|\hat{\psi}\rangle\; |^2$.
\begin{lemma}\label{lem:probaccept} Except with probability at most $3(N^{-1}+\ell^2 N^{-2}) \epsilon^{-2}+4N^2 e^{-\ell\epsilon^2/32}$ over the choice of $S,U'$, we have $\left|\frac{\langle \hat{\psi}|\Pi_{U'} |\hat{\psi}\rangle}{\|\;|\psi\rangle\;\|^2}-3/4\right|\leq\epsilon$. Recall $\Pi_{U'}$ is the projection operator $\sum_{z\in U'}|z\rangle\langle z|$.
\end{lemma}
	
We now return to the proof of Theorem~\ref{thm:fi}. $\Ds_S'$ is almost Fourier Independent, except that the distributions over $U'$ are not $k$-wise uniform independent. However, Lemmas~\ref{lem:detupperworst} and~\ref{lem:detupperavg} show, in some sense, that $U'$ is very close to being $k$-wide uniform independent, for somewhat large $k$. We will then invoke the assumed Conjecture~\ref{conj:stat} to replace $U'$ with a distribution over $U$ that is actually $k$-wise uniform independent, for a sufficiently large $k$, and is close in substitution error to $U'$. We then show that the small substitution error between $U$ and $U'$ implies that the statements of Lemma~\ref{lem:probaccept} still approximately holds, even for $U$.
	
In more detail, let $X_0$ be the distribution over $U'$ stemming from $\Ds_S'$. Let $r(N),k(N),\zeta(N),\eta(N)$ be the functions guaranteed by Conjecture~\ref{conj:stat}. Define $N=\Theta(2^n)$ and $p = r(N)^4 \zeta(N)^{-2} N^{-1} \log^2 N$. By the conditions Conjecture~\ref{conj:stat} places on $r(N),k(N),\zeta(N),\eta(N)$, we therefore have that $p\leq n^{-\omega(1)} \log^2 N \leq n^{-\omega(1)}$. By standard concentration inequalities, the size $\ell$ of $S$ sampled from $\Bernoulli_p^N$ is very close to $p N=r(N)^4 \zeta(N)^{-2} \log^2 N$, except with negligible probability. Then set $\epsilon=\zeta(N)/r(N)^2$, which by the conditions of Conjecture~\ref{conj:stat} gives that $N^{-1}\epsilon^{-2}$, $\ell^2 N^{-2}\epsilon^{-2}$, and $N^2 e^{-\ell\epsilon^2/32}$ are all negligible. By Lemma~\ref{lem:detupperworst} and by invoking Lemma~\ref{lem:detupperavg} with $\epsilon=\zeta(N)/r(N)^2$, we have except with negligible probability over the choice of $S$, that $2^{-|T|}\leq \tau_S^T\leq 2^{-|T|}(1+\zeta)$ for all sets $T$ of size at most $r(N)$. Conjecture~\ref{conj:stat} then implies for ``good'' $S$ where the bounds on $\tau_S^T$ hold, there is a distribution $X_1$ that is $k(N)$-wise uniform independent and such that $\|X_0-X_1\|_{\sf sub}\leq \eta(N)$, with $k(N),1/\eta(N)\geq (\log N)^{\omega(1)}=n^{\omega(1)}$.
	
Let $S$ be a good set. Now consider the following distribution $\Ds_S$. It first uses the fact that $\|X_0-X_1\|_{\sf sub}\leq n^{-\omega(1)}$ to derive a joint distribution $(Z_0,Z_1)$ with the marginal $Z_0$ being equivalent to $X_0$ and $Z_1$ being $k$-wise uniform independent for $k\geq n^{\omega(1)}$. Then we sample $(|\psi\rangle,U')\gets\Ds_S'$ and sample $U$ from the distribution $Z_1$ conditioned on $Z_0=U'$. Let $|\psi'\rangle=|\psi\rangle/\|\;|\psi\rangle\;\|$. Output $(|\psi'\rangle, U)$. We then have that $U$ is distributed according to $Z_1$, and is therefore $k$-wise uniform independent. We also have by Lemma~\ref{lem:probaccept} that except with negligible probability, $\langle \hat{\psi}'|\Pi_{U'} |\hat{\psi}'\rangle\geq 2/3$. In order to justify Fourier Independence, we just need to show that, e.g. $\langle \hat{\psi}'|\Pi_{U} |\hat{\psi}'\rangle\geq 7/12$.
	
Toward that end, write $|\hat{\psi}'\rangle=\frac{1}{\sqrt{N}}\sum_{z\in\Z_N}\beta_z' |z\rangle$ where $\sum_z |\beta_z'|^2=1$. Then deinfe
\[E:=\langle \hat{\psi}'|\Pi_{U'} |\hat{\psi}'\rangle-\hat{\psi}'|\Pi_{U} |\hat{\psi}'\rangle=\sum_{z\in U'}|\beta_z'|^2 - \sum_{z\in U}|\beta_z'|^2 \leq \sum_{z\in U'\setminus U}|\beta_z'|^2\]
Now let $\xi_z$ denote the variable that is 1 if $z\notin U'\setminus U$ and is 0 otherwise. Then 	$E\leq \sum_z\xi_z |\beta_z'|^2$. Each $\xi_z$ is a 0/1 random variable with negligible expectation. Therefore, since $\sum_{z}|\beta_z'|^2=1$, we have that $\E[E]$ is negligible. Then by Markov's inequality, we have that $\Pr[E\geq \sqrt{\E[E]}]\leq \sqrt{\E[E]}$. Therefore, since $\E[E]$ and hence $\sqrt{\E[E]}$ are negligible, we have that $E$ is negligible except with negligible probability. This means in particular that $\langle \hat{\psi}'|\Pi_{U} |\hat{\psi}'\rangle\geq 7/12$, except with negligible probability. Thus $\Ds_S$ is $(k,\delta,\gamma)$-Fourier-Independent, thereby proving Theorem~\ref{thm:fi}.\end{proof}

\subsection{From FI Distributions to An Oracle Separation}

We now show that FI distributions as guaranteed by Theorem~\ref{thm:fi} give a separation between $\OIQCMA$ and $\OIQMA$. 

\begin{theorem}\label{thm:sep} Suppose Conjecture~\ref{conj:stat} holds. Then $\OIQCMA\neq\OIQMA$.
\end{theorem}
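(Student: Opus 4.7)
The plan is to use the Fourier-Independent (FI) distribution from Theorem~\ref{thm:fi} to construct a separating oracle-input promise problem. Let $N(n), p(n), k(n), \gamma(n), \delta(n), \epsilon(n)$ be as in that theorem and fix a large polynomial $P(n)$. Interpret each $\Xs \in \Us_n$ as a pair $(S, U)$ of subsets of $[N]$, and restrict $\Us_n$ to those pairs satisfying either the YES condition that some $|\psi\rangle$ supported on $S$ has $\langle \hat{\psi}|\Pi_U|\hat{\psi}\rangle \geq 1/2 + \gamma$, or the NO condition that $|S| \leq P(n)$ and $\lambda_{\max}(\Pi_S \QFT^\dagger \Pi_U \QFT \Pi_S) \leq 1/2 + \gamma/4$. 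Let $\OILanguage$ be the YES set. The $\QMA$ verifier takes $|\psi\rangle$, coherently support-checks on $S$, applies $\QFT$, and support-checks on $U$; its acceptance probability equals $\langle \psi|\Pi_S\QFT^\dagger\Pi_U\QFT\Pi_S|\psi\rangle$, which is $\geq 1/2 + \gamma$ on YES (using the FI witness guaranteed by Theorem~\ref{thm:fi}) and $\leq 1/2 + \gamma/4$ on NO by definition. By the remark after Definition~\ref{def:oiqmaqcma}, this inverse-polynomial gap suffices, so $\OILanguage \in \OIQMA$.

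For $\OIQCMA$-hardness, fix a candidate $\QCMA$ verifier $V$ with $|w|$-bit classical witness and $q = \poly(n)$ queries. Sample $S \gets \Bernoulli_p^N$ and then $(|\psi\rangle, U) \gets \Ds_S$; by Theorem~\ref{thm:fi} this gives a YES instance except with negligible probability, so $V^{S,U}(w)$ accepts with probability $\geq 2/3$. Fix an inverse polynomial $\tau$ and call $y \in [N]$ \emph{heavy} if the total $S$-query mass of $V^{S,U}(w)$ on $y$ exceeds $\tau$; there are at most $q/\tau$ heavy points. Let $S' := S \cap \{\text{heavy}\}$, so $|S'| \leq q/\tau \leq P(n)$. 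The key sub-claim is that $(S', U)$ is a NO instance with overwhelming probability over the FI sampling: for any fixed $T \subseteq [N]$ with $|T| \leq P(n)$, a $k$-wise independent Chernoff bound applied to $\sum_{z \in U}|\hat{\psi}_z|^2$, combined with an $\epsilon$-net over unit vectors supported on $T$, shows $\lambda_{\max}(\Pi_T \QFT^\dagger \Pi_U \QFT \Pi_T) \leq 1/2 + \gamma/4$ with doubly-exponentially high probability (using $k \geq n^{\omega(1)}$, $N$ exponential, $\gamma \geq n^{-O(1)}$). A union bound over the at most $N^{P(n)}$ candidate sets $T$ then yields the claim for every such $T$ simultaneously, in particular for $T = S'$.

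The argument now bifurcates. If $V^{S',U}(w)$ accepts with probability close to $V^{S,U}(w)$, then $V$ incorrectly accepts the NO instance $(S', U)$, a contradiction. Otherwise by Lemma~\ref{lem:BBBV}, $V$'s total $S$-query mass on $S \setminus S'$ is at least some inverse polynomial $\epsilon'$. Define a witness-free extraction process $\mathcal{P}$ with oracle access only to $S$: guess $w' \gets \{0,1\}^{|w|}$ uniformly, then for each of $T = \poly(n, |w|)$ independent trials simulate $V^{S, U'}(w')$, where $U'$ is a uniform random subset of $[N]$ sampled and simulated internally; in each trial, measure the query register at a uniformly chosen $S$-query. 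By Lemma~\ref{lem:kwiseindep}, since the marginal of $U$ under $\Ds_S$ is $k$-wise uniform independent and the total $U$-queries across all trials stay far below $k/2$, the simulation using $U'$ is perfectly indistinguishable to $V$ from the true $U$. Conditioned on $w' = w$ (probability $2^{-|w|}$), each trial outputs a point in $S \setminus S'$ with probability $\geq \epsilon'/q$ while each individual light point has per-trial probability $\leq \tau/q$; for $\tau$ suitably small and $T$ suitably large, a standard balls-and-bins calculation yields $L := |w| + 2$ distinct points of $S$ with constant probability. Overall, $\mathcal{P}$ outputs $L$ distinct points of $S$ with probability $\geq 2^{-|w|-O(1)}$ using $Q = Tq = \poly(n)$ queries to $S$.

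Finally, Lemma~\ref{lem:multi} together with $p(n) = n^{-\omega(1)}$ upper-bounds this success probability by $(Cp(Q/L)^2)^L = 2^{-\omega(\log n)(|w|+2)} \ll 2^{-|w|-O(1)}$ for large enough $n$, a contradiction; hence no polynomial-query $\QCMA$ verifier decides $\OILanguage$, so $\OILanguage \notin \OIQCMA$ and $\OIQCMA \neq \OIQMA$. The main obstacle is the careful balancing of the polynomial parameters $\tau, P, T, L$ so that (i) $|S'| \leq q/\tau$ fits inside the NO-size threshold $P(n)$, (ii) the $k$-wise concentration certifying $(S', U) \in$ NO survives the exponential union bound over sets $T$ of size up to $P(n)$, and (iii) the extracted $L$ distinct points are numerous enough that the multi-Grover upper bound beats the $2^{-|w|}$ cost of guessing the classical witness~--- all comfortably satisfiable given the gaps between $N$ (exponential), $k$ (superpolynomial), $p$ (negligible), and the other polynomial parameters.
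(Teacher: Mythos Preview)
Your proposal follows the same high-level strategy as the paper: use the FI distribution to produce a YES instance $(S,U)$, carve out a small $S'$ from the verifier's $S$-queries so that $(S',U)$ is a NO instance, and argue via BBBV plus the multi-Grover bound (Lemma~\ref{lem:multi}) together with $k$-wise independence (Lemma~\ref{lem:kwiseindep}) that the QCMA verifier still accepts $(S',U)$. Two technical choices differ. First, you take $S'$ to be the set of heavy $S$-query points above a threshold $\tau$, which leads to a bifurcation (either $V_*$ accepts $(S',U)$, or extraction contradicts multi-Grover); the paper instead builds $S'$ adaptively by running $V_*$ repeatedly and collecting measured query points (${\sf GenSmallSet}$), then bounds $\E[|S'|]$ via multi-Grover and uses monotonicity of the per-round hit probabilities to bound the residual mass on $S\setminus S'$ directly, avoiding any case split. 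Second, to certify $(S',U)$ is NO you do a $k$-wise Chernoff bound per state, $\epsilon$-net, and union bound over all $T$ of size $\le P(n)$; the paper bounds every entry of $\QFT^\dagger\Pi_U\QFT$ and applies Gershgorin (Lemma~\ref{lem:pairwisesmall}), which only needs $k'=5$-wise independence rather than the full superpolynomial $k$. Both routes work.

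There is one genuine slip in your extraction step. You define $\mathcal{P}$ to internally sample a uniform $U'$ and simulate $V^{S,U'}(w')$, and then assert that ``conditioned on $w'=w$, each trial outputs a point in $S\setminus S'$ with probability $\ge\epsilon'/q$ while each individual light point has per-trial probability $\le\tau/q$.'' But $\epsilon'$, $S'$, and the heavy/light partition are all defined relative to the \emph{specific} true $U$ realized in branch~2, whereas $\mathcal{P}$'s trials use the independent $U'$; Lemma~\ref{lem:kwiseindep} only equates output distributions \emph{averaged} over $U\gets\Ds_S$ with those under uniform $U'$, not pointwise for the particular $U$ on which branch~2 was conditioned. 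The fix is to reverse the order of the hybrids: first analyze extraction run with the \emph{true} oracles $(S,U)$ and true $w$ (where your mass bounds genuinely hold and yield $L$ distinct points with constant probability), then pass to random $w'$ at cost $2^{-|w|}$, and only then invoke Lemma~\ref{lem:kwiseindep} on the averaged-over-$U$ process to replace $U$ by uniform $U'$, at which point multi-Grover applies. This is exactly the order the paper uses in its proof of Lemma~\ref{lem:vstarbad}.
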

\begin{proof}We first invoke Theorem~\ref{thm:fi} to obtain distributions $\Ds_S$ satisfying properties (1) through (5) in the statement of Theorem~\ref{thm:fi}. We now use this to construct our separation. We will invoke the definition of $\OIQMA$ and $\OIQCMA$ with $a=1/2+\gamma(n)/2$ and $b=1/2+\gamma(n)$. Since $b-a=\gamma(n)/2$ is inverse polynomial, this is equivalent to the standard definition of $\OIQMA$ and $\OIQCMA$. Thus, a valid verifier will accept YES instances of size $n$ (given a correct witness) with probability at least $1/2+\gamma(n)$, and accept NO instances with probability at most $1/2+\gamma(n)/2$. 
	
An instance will be a pair of oracles $(S,U)$ where $S,U\subseteq [N(n)]$. Our verifier $V^{S,U}(|\psi\rangle)$ will do the following: $V$ will make a query to $S$ on the witness state $|\psi\rangle$ and measuring the output. If it accepts, then $V$ will apply the $\QFT$ to the witness state (which may have changed due to the measurement), and make a query to $U$, measuring the output. If both queries accept, then $V$ will output 1; if either measurement rejects, then $V$ will output 0.

We define the universe $\Us$ as the set of pairs $(S,U)$ for which either (1) there exists a state $|\psi\rangle$ such that  $\Pr[V^{S,U}(|\psi\rangle)=1]\geq 1/2+\gamma(n)$, or (2) for all states $|\psi\rangle$, $\Pr[V^{S,U}(|\psi\rangle)=1]< 1/2+\gamma(n)/2$. Then $\OILanguage\subseteq\Us$ is the set such that $\Pr[V^{S,U}(|\psi\rangle)=1]\geq 1/2+\gamma(n)$. By definition, $\OILanguage\in\OIQMA$.

We now show that $\OILanguage\notin \OIQCMA$. We can always assume without loss of generality that there is, say, a fixed quadratic running time $t$ such that the running time of any $\OIQCMA$ verifier is bounded by $t(|x|+|w|)$ where $|x|$ is the instance length and $|w|$ is the witness length. This is accomplished by padding the witness length with 0's that just get ignored by the verifier. 

Now let $q=q(n)$ be a sufficiently small super-polynomial function, which we will take as an upper bound on witness length. Let $Q(n)=q(n)^2$, which we take to be an upper bound on the number of queries when the witness length is at most $q(n)$. Let $v=v(n)$ be another super-polynomial. We will need the following constraints on $q,Q,v$:
\begin{align*}
	Q&\ll \gamma p^{-1/2}&
	v&\gg qQ^4\gamma^{-4}\\
	v&\ll \gamma N^{1/12}&
	Qk/v&\leq n^{-\omega(1)} 
\end{align*}
These can be satisfied for a sufficiently-small super-polynomial $q,v$ and for sufficiently large $n$. 

Suppose toward contradiction that there is such a $\OIQCMA$ verifier $V_*$. Sample $S\gets \Bernoulli_{p(n)}^{N(n)}$ and $(|\psi\rangle,U)\gets\Ds_S$, where $\Ds_S$ is $(k(n),\delta(n),\gamma(n))$-Fourier-Independent. Then we have that with overwhelming probability over the choice of $S,U,|\psi\rangle$, $\Pr[V^{S,U}(|\psi\rangle)=1]\geq 1/2+\gamma(n)$. Hence $(S,U)\in\OILanguage$. Therefore, under the assumption that $V_*$ is a verifier for $\OILanguage$, $V_*$ must accept $(S,U)$, meaning there is a classical witness $w$ such that $\Pr[V_*^{S,U}(w)=1]\geq 1/2+\gamma(n)$. 
	
We will now construct a different instance $S',U$, where $S'$ is constructed from the following algorithm ${\sf GenSmallSet}^{S,U}(w)$: initialize $S'=\{\}$, and then repeat the following loop for $i=1,...,v$:
	\begin{itemize}
		\item Run $V_*^{S,U}(w)$ until a randomly chosen query to $S$, and measure the query, obtaining a string $y_i\in[N]$.
		\item If $y_i\in S\setminus S'$, add $y_i$ to $S'$.
	\end{itemize}

\noindent The following is proved in Section~\ref{sec:missproof}:
\begin{lemma}\label{lem:pairwisesmall} If $U$ is sampled from a $k'$-wise uniform independent function and $S'$ is potentially correlated to $U$ but has size at most $v$, then except with probability $2N^2\left(\frac{\sqrt{ek'}}{\epsilon\sqrt{N}}\right)^{k'}$ over the choice of $U,S'$, it holds that, for any normalized state $|\phi\rangle$ with support on $S'$, $\langle\hat{\phi}|\Pi_U|\hat{\phi}\rangle\leq 1/2+v\epsilon$, where $|\hat{\phi}\rangle=\QFT|\phi\rangle$.
\end{lemma}
\begin{corollary}\label{cor:inL}Except with negligible probability over the choice of $S,U,S'$, $(S',U)\in\Us\setminus\OILanguage$.
\end{corollary}
\begin{proof}
Set $\epsilon=\gamma/3v$, and set $G:=2N^2\left(\frac{3v\sqrt{ek'}}{\gamma\sqrt{N}}\right)^{k'}$ for a yet-unspecified $k'$. By Lemma~\ref{lem:pairwisesmall}, except with probability at most $G$, $\Pr[V^{S',U}(|\phi\rangle)=1]\leq 1/2+\gamma/3< 1/2+\gamma/2$ for any state $|\phi\rangle$. In this case, $(S',U')\in \Us\setminus\OILanguage$. This event holds regardless of what $S'$ is, just using the fact that it consists of at most $v$ elements. Setting $k'=5$ and using that $v\leq \gamma N^{1/12}$ we have that $G=O(N^{1/12})$, which is negligible.
\end{proof}

\noindent We now show, however, that $V^*$ fails to reject $S',U$:
\begin{lemma}\label{lem:vstarbad} Except with negligible probability over the choice of $S,U,w,S'$, $\Pr[V_*^{S',U}(w)=1]> 1/2+\gamma(n)/2$
\end{lemma}
\begin{proof}Observe that the process ${\sf GenSmallSet}^{S,U}(w)$ for constructing $S'$ always generates $S'\subseteq S$. Now consider running $S'\gets{\sf GenSmallSet}^{S,U'}(w')$, where $w'$ is a random string and $U'$ is a random boolean function, with both $w',U'$ independent of $S$. This is an algorithm which makes $vQ$ queries to $S$ (and also to $U'$, but $U'$ us simulatable on its own since it is independent of $S$). $S$ in turn sampled from $\Bernoulli_{p}^N$. Finally, the algorithm outputs some subset $S'$ of $S$. By Lemma~\ref{lem:multi}, there is a universal constant $C$ such that 
	\[\Pr[|S'|\geq K:w',U'\text{ are uniform and independent of }S]\leq (Cp(vQ/K)^2)^K.\]
	
Now consider running $S'\gets {\sf GenSmallSet}^{S,U}(w')$, replacing $U'$ with $U$. Recall that $U$ is $k$-wise independent even conditioned on $S$, for $k\geq 2vQ$ where $vQ$ is the number of queries made by ${\sf GenSmallSet}$. By Lemma~\ref{lem:kwiseindep}, the output distribution of ${\sf GenSmallSet}^{S,U}(w')$ is identical to that of ${\sf GenSmallSet}^{S,U'}(w')$. Thus, we still have 
	\[\Pr[|S'|\geq K:U'=U, \text{ while }w'\text{ is uniform and independent of }S]\leq (Cp(vQ/K)^2)^K.\]

Finally, consider running $S'\gets{\sf GenSmallSet}^{S,U}(w)$. Since $\Pr[w'=w]=2^{-q}$, this means that with probability $2^{-q}$, ${\sf GenSmallSet}^{S,U}(w')$ is actually running ${\sf GenSmallSet}^{S,U}(w)$. Therefore, 
	\[\Pr[|S'|\geq K:U'=U,w'=w]\leq (Cp(vQ/K)^2)^K\times 2^q.\]

We now set $K=v Q\sqrt{2 C p} + q+\log_2(v)$. Then we have that $\Pr[|S'|\geq K:U'=U,w'=w]\leq 1/v$. Then in particular since $|S'|\leq v$ always, we have that $\E[|S'|]\leq (K-1)\Pr[|S'|<K]+v\Pr[|S'|\geq K]\leq (K-1)+1\leq K$.

Let $e_j$ be the probability that ${\sf GenSmallSet}$ adds an element $y_i$ to $S'$ in the $i$th iteration. Then $\sum_{j=1}^v e_j=\E[|S'|]\leq K$. We also have that the $e_j$ are monotonically decreasing since the $y_i$ are identically distributed and thus the probability mass outside of the growing $S'$ can only shrink. Thus $e_v\leq K/v$. 

For an input $y\in S$, let $M_y$ denote the total magnitude squared of $y$ in all queries made by $V_*^{S,U}(w)$ to oracle $S$. Then measuring a random choice of the $\leq Q$ queries by $V_*$, we will obtain $y$ with probability at least $M_y/Q$. Since $e_v\leq K/v$, this means that by the end of the loop, the expected total query weight of all points in $S$ but not in $S'$ is at most $QK/v$. By Markov's inequality, we therefore have that the query weight of points in $S\setminus S'$ is at most $\sqrt{QK/v}$, except with probability at most $\sqrt{QK/v}$. Recall that $\sqrt{QK/v}$ is negligible. Therefore, since this probability is negligible, we will therefore assume the total query weight of points in $S\setminus S'$ is at most $\sqrt{QK/v}$.

Lemma~\ref{lem:BBBV} shows that the difference in acceptance probability between $V_*^{S,U}(w)$ and $V_*^{S',U}(w))$ is at most $4\sqrt{Q\times \sqrt{QK/v}}=4\sqrt[4]{Q^3K/v}$. Observe that $Q^3K/v = Q^4\sqrt{2Cp}+Q^3(q+\log_2(v))/v$. Using that $Q\ll \gamma p^{-1/2}$ and $v\gg qQ^4/\gamma^4$, we therefore have that $Q^3K/v\leq \gamma^4/(12)^4$. Hence the difference in acceptance probability is at most $\gamma/3$, and hence $\Pr[V_*^{S',U}(w))=1]> 1/2+\gamma/2$.
\end{proof}

Corollary~\ref{cor:inL} and Lemma~\ref{lem:vstarbad} shows that $V_*$ fails to decide $\OILanguage$, contradicting it being a $\QCMA$ verifier. Hence, $\OILanguage\notin\QCMA$. This completes the proof of Theorem~\ref{thm:sep}\end{proof}

\section{Missing Proofs}\label{sec:missproof}

We now prove Lemmas~\ref{lem:detupperworst},~\ref{lem:detupperavg},~\ref{lem:boundnorm}, and~\ref{lem:probaccept} as well as Theorem~\ref{thm:equiv}. To do so, we introduce some extra notation. For a subset $S\subseteq\Z_N$, let $\Mm^S$ be the $N\times N$ matrix defined as $\QFT^\dagger\circ\Pi_S\circ\QFT$. Define $\tau^S_T:=\Pr_{U'\gets\Ds'_S}[T\cap U=\emptyset]$, or equivalently, the probability $U'(z)=0$ for all $z\in T$, once $S$ is chosen. Let $\Ms^S_T$ be the $|T|\times |T|$ sub-matrix of $\Mm^S$ consisting of rows and columns columns whose indices are in $T$. The following lemmas will be useful.

\begin{lemma}\label{lem:boundM} For any $\epsilon>0$, except with probability at most $2N^2 e^{-\epsilon^2 N^2/8\ell}$ over the choice of random $S$ of size $\ell$, $\max_{z\neq z'}|\Mm^S_{z,z'}|\leq \epsilon$
\end{lemma}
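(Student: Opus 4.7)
The plan is to unfold $\Mm^S = \QFT^\dagger\Pi_S\QFT$ entrywise as a Fourier sum over $S$, notice that for $z\neq z'$ the entry is a mean-zero sample average, and then apply a concentration bound plus a union bound.

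First I would use $\QFT|y\rangle = \frac{1}{\sqrt{N}}\sum_z e^{i2\pi yz/N}|z\rangle$ to compute
\[
\Mm^S_{z,z'} \;=\; \sum_{y\in S}\langle z|\QFT^\dagger|y\rangle\langle y|\QFT|z'\rangle \;=\; \frac{1}{N}\sum_{y\in S} e^{i2\pi y(z'-z)/N}.
\]
For $z\neq z'$ set $w=(z'-z)\bmod N\neq 0$ and let $X_y=\mathbb{1}[y\in S]$. Since $\sum_{y\in\Z_N} e^{i2\pi yw/N}=0$ for $w\not\equiv 0\pmod N$, we get $\E[\Mm^S_{z,z'}]=\frac{\ell}{N^2}\sum_y e^{i2\pi yw/N}=0$, so the quantity to bound is a mean-zero average of $\ell$ unit complex numbers drawn uniformly without replacement from the population $\{e^{i2\pi yw/N}\}_{y\in\Z_N}$.

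Next I would split into real and imaginary parts, $V_R=\sum_y X_y\cos(2\pi yw/N)$ and $V_I=\sum_y X_y\sin(2\pi yw/N)$, each a sum of $\ell$ samples without replacement from a population in $[-1,1]$ with population mean $0$. Hoeffding's inequality for sampling without replacement (Hoeffding 1963) applies verbatim and gives
\[
\Pr\!\left[|V_R|\ge \tfrac{\epsilon N}{\sqrt{2}}\right], \;\Pr\!\left[|V_I|\ge \tfrac{\epsilon N}{\sqrt{2}}\right] \;\le\; 2\,e^{-\epsilon^2 N^2/(4\ell)}.
\]
The event $|\Mm^S_{z,z'}|>\epsilon$ is equivalent to $V_R^2+V_I^2>\epsilon^2 N^2$, which forces at least one of $|V_R|,|V_I|$ to exceed $\epsilon N/\sqrt{2}$. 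A union bound therefore controls one off-diagonal entry by $4e^{-\epsilon^2 N^2/(4\ell)}$.

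Finally I would union-bound over the $<N^2$ off-diagonal pairs $(z,z')$. This yields $4N^2 e^{-\epsilon^2 N^2/(4\ell)}$, which is at most the claimed $2N^2 e^{-\epsilon^2 N^2/(8\ell)}$ in the only regime in which either bound is nontrivial (namely $\epsilon^2 N^2/\ell\gtrsim \log 2$; otherwise both bounds exceed $1$ and the lemma is vacuous). The only real subtlety is that the $X_y$ are correlated under fixed-size sampling rather than independent; this is handled by invoking Hoeffding's extension of his inequality to sampling without replacement, so no fresh martingale machinery is needed. Everything else is a direct Fourier calculation.
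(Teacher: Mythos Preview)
Your proposal is correct and follows essentially the same approach as the paper: compute the off-diagonal entry as a Fourier sum over $S$, split into real and imaginary parts, apply Hoeffding (noting it holds for sampling without replacement), and union-bound over entries. The only cosmetic differences are that the paper uses a slightly looser Hoeffding constant to land directly on the exponent $\epsilon^2 N^2/8\ell$, and it union-bounds over the $\binom{N}{2}$ upper-triangular entries (since $\Mm^S$ is Hermitian) to get the prefactor $2N^2$ without your extra comparison step.
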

\begin{proof}Fix any $z\neq z'$, and consider the random variable $M^S_{z,z'}$ where $S$ is a random set of size $\ell$. We write:
	\[M^S_{z,z'}=\frac{1}{N}\sum_{y\in S}e^{i2\pi(z-z')y/N}=\frac{1}{N}\sum_{j=1}^\ell e^{i2\pi (z-z')y_j/N}\]
	where $y_1\cdots,y_\ell$ range over the elements of $S$, and are therefore random distinct values in $\Z_N$. 
	
	We first look at the real part of $\Mm^S_{z,z'}$, namely $\frac{1}{N}\sum_{i=1}^\ell Y_i$ where $Y_i=\cos(2\pi(z-z')y_i/N)$. For the moment, imagine the $y_i$ being uniform without the distinctness requirement, in which case since $z-z'\neq 0$ the $Y_i$ all become independent random variables bounded to the range $[-1,1]$. Moreover, the means are zero since $z-z'\neq 0\bmod N$. Then by Hoeffding's inequality, we have
	\[\Pr\left[\left|\sum_{i=1}^\ell Y_i\right|\geq t\right]\leq 2e^{-t^2/4\ell}\]
	We then observe that switching to distinct $y_i$ cannot make the inequality worse. This is because Hoeffding's inequality still holds when sampling is performed without replacement; in fact even better bounds are possible~\cite{Serfling74}.

	We now look at the imaginary part $\frac{1}{N}\sum_{i=1}^\ell Y_i'$ where $Y_i'=\sin(2\pi(z-z')y_i/N)$. By identical logic, we have that 
	\[\Pr\left[\left|\sum_{i=1}^\ell Y_i'\right|\geq t\right]\leq 2e^{-t^2/4\ell}\]
	Combining the two inequalities with the fact that the real and imaginary parts are orthogonal gives $\Pr[|M^S_{z,z'}|\geq \sqrt{2}t/N]\leq 4e^{-t^2/4\ell}$. Setting $t=\epsilon N/\sqrt{2}$ gives that 
	\[\Pr\left[\left|M^S_{z,z'}\right|\geq \epsilon\right]\leq 4e^{-\epsilon^2 N^2/8\ell}\]
	Taking a union bound over all $\binom{N}{2}$ off-diagonal terms\footnote{Since $M^S$ is Hermitian, we only need to bound, say, the terms above the diagonal.} we obtain the lemma.\end{proof}

\begin{lemma}\label{lem:marginals} Fix subsets $S,T$ with $|S|=\ell$. Then 
	\[\tau^S_T=\frac{1}{\det(\Id+\frac{N}{\ell}\Mm_S^T)}=\frac{1}{\det(\Id+\frac{N}{\ell}\Mm_T^S)}\]
\end{lemma}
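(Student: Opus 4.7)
The plan is to compute $\tau^S_T$ as a Gaussian integral in $\alpha$ and then identify the resulting determinant with the stated expression; the two claimed equalities will correspond to the two sides of Sylvester's determinant identity $\det(\Id_m+AB)=\det(\Id_n+BA)$.

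First, fix $S$ with $|S|=\ell$ and condition on the Gaussian vector $\alpha=(\alpha_y)_{y\in S}$. Since the construction of $\Ds'_S$ places each $z$ into $U'$ independently with probability $1-e^{-|\beta_z|^2}$, I have
\[
\tau^S_T=\E_\alpha\Bigl[\prod_{z\in T}e^{-|\beta_z|^2}\Bigr]=\E_\alpha\bigl[e^{-\sum_{z\in T}|\beta_z|^2}\bigr].
\]
Next I would write $\beta\!\restriction_T=\widetilde{B}\alpha$ where $\widetilde{B}$ is the $|T|\times|S|$ matrix with entries $\widetilde{B}_{z,y}=e^{i2\pi yz/N}$, so that $\sum_{z\in T}|\beta_z|^2=\alpha^\dagger\widetilde{B}^\dagger\widetilde{B}\,\alpha$. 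A direct computation gives $(\widetilde{B}^\dagger\widetilde{B})_{y,y'}=\sum_{z\in T}e^{i2\pi z(y'-y)/N}=N\,\Mm^T_{y,y'}$, so $\widetilde{B}^\dagger\widetilde{B}=N\,\Mm^T_S$, the restriction of $N\Mm^T$ to rows and columns indexed by $S$.

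Since the density of $\alpha$ is $(\ell/\pi)^\ell e^{-\ell\alpha^\dagger\alpha}$ (each component being $\Ns^\C(0,1/\sqrt{\ell})$), I can combine the two quadratic forms and invoke the Gaussian identity from the preliminaries (read in the standard $\pi^n$ form for general dimension $n=\ell$):
\[
\tau^S_T=\Bigl(\frac{\ell}{\pi}\Bigr)^\ell\!\int_\C e^{-\alpha^\dagger(\ell\Id+N\Mm^T_S)\alpha}\,d\alpha=\frac{\ell^\ell}{\det(\ell\Id+N\Mm^T_S)}=\frac{1}{\det(\Id+\tfrac{N}{\ell}\Mm^T_S)},
\]
which is the first claimed equality (with the paper's subscript/superscript convention, $\Mm^T_S=\Mm_S^T$). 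For the second equality I would apply Sylvester's determinant theorem to $\widetilde{B}$ and $\widetilde{B}^\dagger$: $\det(\Id_{|S|}+c\,\widetilde{B}^\dagger\widetilde{B})=\det(\Id_{|T|}+c\,\widetilde{B}\widetilde{B}^\dagger)$ for any scalar $c$. An analogous direct computation shows $(\widetilde{B}\widetilde{B}^\dagger)_{z,z'}=\sum_{y\in S}e^{i2\pi y(z-z')/N}=N\overline{\Mm^S_{z,z'}}=N(\Mm^S_T)^T_{z,z'}$. Because $\Mm^S_T$ is Hermitian its determinant equals that of its transpose, so $\det(\Id+\tfrac{N}{\ell}\Mm^T_S)=\det(\Id+\tfrac{N}{\ell}\Mm^S_T)$, completing the lemma.

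There is no real obstacle here; the proof is a careful bookkeeping of indices combined with a standard complex Gaussian integral and Sylvester's identity. The only points requiring attention are: (i) keeping straight the convention that a superscript denotes which set defines the operator $\Mm^{(\cdot)}=\QFT^\dagger\Pi_{(\cdot)}\QFT$ while a subscript denotes which rows/columns are retained; (ii) reading the preliminaries' Gaussian identity in its general $n$-dimensional form $\pi^n/\det(\Mm)$; and (iii) noting that $\widetilde{B}\widetilde{B}^\dagger$ equals the relevant submatrix up to transpose, which is harmless inside a determinant.
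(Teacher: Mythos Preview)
Your proof is correct and follows essentially the same route as the paper: express $\tau^S_T$ as the Gaussian expectation $\E_\alpha[e^{-\sum_{z\in T}|\beta_z|^2}]$, identify the exponent as $\alpha^\dagger(N\Mm^T_S)\alpha$, evaluate the resulting complex Gaussian integral to obtain $1/\det(\Id+\tfrac{N}{\ell}\Mm^T_S)$, and then apply Sylvester's determinant identity (which the paper calls the Weinstein--Aronszajn identity) to swap $S$ and $T$. Your extra care in noting that $\widetilde{B}\widetilde{B}^\dagger$ equals $N\Mm^S_T$ only up to transpose/conjugation, and that this is harmless inside a determinant, is in fact a small refinement over the paper's write-up.
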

\begin{proof}Recall that $U'$ is sampled by first sampling the state $|\psi\rangle$ with support on $S$ where each coefficient $\alpha_y$ is Gaussian distributed from $\Ns^\C(0,\sigma)$. Then each $z$ is excluded from $U'$ with probability $e^{-|\beta_z|^2}$, where $\beta_z/\sqrt{N}$ is the Fourier coefficient of $z$ in $|\psi\rangle$. Thus, the probability that $T\cap U'=\emptyset$, conditioned on $|\psi\rangle$ is just 
	\[\Pr[T\cap U'=\emptyset|\;|\psi\rangle]=e^{-\sum_{z\in T}|\beta_z|^2}=e^{-N\langle \psi|\QFT^\dagger \Pi_T \QFT|\psi\rangle}=e^{-N\langle \psi|\Mm^T|\psi\rangle}\]
	
	Now let $\vv$ be the vector of the $\alpha_y$ as $y$ varies in $S$. Then $|\psi\rangle$ is just $\vv$ with zeros inserted in each position outside of $S$. Then we can write \[\Pr[T\cap U'=\emptyset|\;|\psi\rangle]=e^{-N\vv^\dagger\Mm^T_S\vv}\]
	We now average over the choice of $\vv$, which is simply distributed as $\Ns^\C(0,\sigma)^\ell$. This gives us
	\begin{align*}\tau^S_T&=\int_{\C^{\ell}} \Pr[\vv]e^{-N\vv^\dagger\Mm^T_S\vv}d\vv\\
		&=\frac{1}{(\pi\sigma^2)^\ell}\int_{\C^{\ell}}e^{-\frac{|\vv|^2}{\sigma^2}-N\vv^\dagger\Mm^T_S\vv}d\vv\\
		&=\frac{1}{(\pi\sigma^2)^\ell}\int_{\C^{\ell}}e^{-\vv^\dagger(\sigma^{-2}\Id+N\Mm^T_S)\vv}d\vv\\
		&=\frac{1}{(\sigma^2)^\ell \det(\sigma^{-2}\Id+N\Mm^T_S)}=\frac{1}{\det(\Id+N\sigma^2 \Mm^T_S)}=\frac{1}{\det(\Id+\frac{N}{\ell} \Mm^T_S)}
	\end{align*}
	where in the last inequality we plugged in $\sigma=\sqrt{1/\ell}$. This gives us the first part of Lemma~\ref{lem:marginals}. Now, we observe that $\Mm^T_S=A^\dagger A$, where $A$ is the $|T|\times\ell$ sub-matrix of $\QFT$ restricted to column indices in $S$ and row indices in $T$. Then by the Weinstein–Aronszajn identity, $\det(\Id+\frac{N}{\ell} \Mm^T_S)=\det(\Id+\frac{N}{\ell}A^\dagger A)=\det(\Id+\frac{N}{\ell}AA^\dagger)=\det(\Id+\frac{N}{\ell}\Mm^S_T)$. This gives the second part of Lemma~\ref{lem:marginals}.\end{proof}

\noindent Going forward, we will use $\overline{\tau}^S_T$ as shorthand for $\det(\Id+\frac{N}{\ell}\Mm_T^S)$. Lemma~\ref{lem:marginals} says that $\tau^S_T=(\overline{\tau}^S_T)^{-1}$.

We can now prove Lemmas~\ref{lem:detupperworst},~\ref{lem:detupperavg},~\ref{lem:boundnorm}, and~\ref{lem:probaccept}.

\subsection{Proof of Lemma~\ref{lem:detupperworst}}

{
	\renewcommand{\thetheorem}{\ref{lem:detupperworst}}
	\begin{lemma} For any subsets $S,T$, $1+|T|\leq \overline{\tau}^S_T\leq 2^{|T|}$\end{lemma}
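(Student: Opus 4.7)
The plan is to exploit the fact that $\Mm^S = \QFT^\dagger \Pi_S \QFT$ is a unitary conjugate of an orthogonal projection, hence $\Mm^S$ itself is an orthogonal projection of rank $\ell = |S|$ and in particular positive semidefinite. Its principal sub-matrix $\Mm^S_T$ is therefore also positive semidefinite (equivalently, it equals $AA^\dagger$ as noted in the proof of Lemma~\ref{lem:marginals}), and every diagonal entry of $\Mm^S$ can be computed directly as $\Mm^S_{z,z} = \sum_{y \in S} |\QFT_{y,z}|^2 = \ell/N$. Consequently, the diagonal entries of the matrix $\Id + \tfrac{N}{\ell}\Mm^S_T$ are all equal to $2$, and the matrix itself is positive definite.

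For the upper bound, I would invoke Hadamard's inequality for positive semidefinite matrices: the determinant is bounded by the product of the diagonal entries. Applied to $\Id + \tfrac{N}{\ell}\Mm^S_T$, this yields
\[
\overline{\tau}^S_T = \det\!\left(\Id + \tfrac{N}{\ell}\Mm^S_T\right) \leq \prod_{z \in T} \left(1 + \tfrac{N}{\ell} \cdot \tfrac{\ell}{N}\right) = 2^{|T|}.
\]

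For the lower bound, let $\lambda_1, \ldots, \lambda_{|T|} \geq 0$ denote the eigenvalues of $\tfrac{N}{\ell}\Mm^S_T$. Then $\overline{\tau}^S_T = \prod_i(1+\lambda_i)$, and expanding this product term by term with all $\lambda_i \geq 0$ gives $\prod_i(1+\lambda_i) \geq 1 + \sum_i \lambda_i$. The remaining sum is the trace $\Tr(\tfrac{N}{\ell}\Mm^S_T) = \tfrac{N}{\ell} \cdot |T| \cdot \tfrac{\ell}{N} = |T|$, which establishes $\overline{\tau}^S_T \geq 1 + |T|$.

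Neither direction presents a real obstacle: the whole argument rests on identifying the trace and diagonal entries of $\Mm^S$ (both immediate from $\Mm^S$ being a projection whose standard basis entries are averages of Fourier coefficients) and then invoking two standard facts about determinants of positive semidefinite matrices. The only thing to be careful about is verifying positive semidefiniteness of $\Mm^S_T$ to justify Hadamard's inequality, which follows from its factorization $\Mm^S_T = AA^\dagger$ already recorded in the proof of Lemma~\ref{lem:marginals}.
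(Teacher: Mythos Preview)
Your proof is correct and follows essentially the same approach as the paper: both identify that $\Id+\tfrac{N}{\ell}\Mm^S_T$ is PSD with all diagonal entries equal to $2$, then use Hadamard's inequality for the upper bound and the trace constraint for the lower bound. Your lower-bound step $\prod_i(1+\lambda_i)\geq 1+\sum_i\lambda_i$ (by expanding and dropping nonnegative cross terms) is in fact a cleaner justification than the paper's, which argues the same inequality via an extremal configuration of the eigenvalues.
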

	\begin{proof}We first observe that the diagonal entries in $\Id+\frac{N}{\ell}\Mm^S$ are exactly $2$. Indeed, the $z$th diagonal entry is $1+(N/\ell)\left(\frac{1}{N}\sum_{x\in S}e^{i2\pi zx}e^{-i2\pi zx}\right)=2$. Moreover, $\frac{N}{\ell}\Mm^S$ and hence $\Id+\frac{N}{\ell}\Mm^S$ is PSD. Since $\Id+\frac{N}{\ell}\Mm_T^S$ is just a principle minor of $\Id+\frac{N}{\ell}\Mm^S$, it is also PSD with diagonal entries also equal to 2. The determinant is then bounded by the product of the diagonal entries, giving the upper bound.
		
		For the lower bound, we have that the eigenvalues of $\Id+\frac{N}{\ell}\Mm_T^S$ sum to $2|T|$, and each of the $|T|$ eigenvalues is at least 1. The determinant is the product of the eigenvalues, which is minimized when one of the eigenvalues is $|T|+1$ and the remaining $|T|-1$ eigenvalues are $1$. This gives the lower bound.
	\end{proof}
}

\subsection{Proof of Lemma~\ref{lem:detupperavg}}

{
	\renewcommand{\thetheorem}{\ref{lem:detupperavg}}
	\begin{lemma} For any $\epsilon>0$, except with probability at most $2N^2 e^{-\epsilon^2 \ell/2}$ over the choice of $S$, for all subsets $T$, $\overline{\tau}^S_T\geq 2^{|T|}(1-|T|^2\epsilon)$. In this event, as long as $|T|^2\epsilon\leq 1/2$, we can bound $\tau^S_T\leq 2^{-|T|}(1+2|T|^2\epsilon)$.
	\end{lemma}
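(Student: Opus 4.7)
The plan is to combine Lemma~\ref{lem:boundM} with a determinant perturbation argument, exploiting the fact that the diagonal entries of $\frac{N}{\ell}\Mm^S$ are identically $1$ for every choice of $S$ (by the computation in the proof of Lemma~\ref{lem:detupperworst}). Consequently, the matrix $\Id + \frac{N}{\ell}\Mm^S_T$ appearing in the definition of $\overline{\tau}^S_T$ always has constant diagonal equal to $2$, and everything reduces to controlling its off-diagonal entries and estimating a determinant that is a small perturbation of $\det(2\Id_{|T|})=2^{|T|}$.

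First, I would invoke Lemma~\ref{lem:boundM} with a suitably rescaled parameter to conclude that, except with probability at most $2N^2 e^{-\epsilon^2\ell/2}$ over the choice of $S$, every off-diagonal entry of $\frac{N}{\ell}\Mm^S$ has absolute value at most $2\epsilon$; the $N/\ell$ factor simply absorbs into the parameter fed into Lemma~\ref{lem:boundM}. On this event, for every $T$ the matrix $\Id+\frac{N}{\ell}\Mm^S_T$ equals $2\Id+D$, where $D$ is Hermitian with zero diagonal and off-diagonal entries of magnitude at most $2\epsilon$. Factoring out a $2$ from each row gives $\overline{\tau}^S_T = 2^{|T|}\det(\Id+C)$ with $C=D/2$ having zero diagonal and off-diagonals bounded by $\epsilon$ in magnitude.

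Next, I would estimate $\det(\Id+C)$ by expanding via the Leibniz formula. The identity permutation contributes $1$; any other permutation $\sigma$ has at least two non-fixed points, and one with exactly $m\geq 2$ non-fixed points contributes a product of $m$ off-diagonal entries of $C$, hence is bounded in magnitude by $\epsilon^m$. Since the number of such permutations is at most $\binom{|T|}{m}\,m!\leq |T|^m$, one obtains $|\det(\Id+C)-1|\leq \sum_{m\geq 2}(|T|\epsilon)^m$, which for $|T|\epsilon$ bounded away from $1$ is at most $2(|T|\epsilon)^2$. Under the mild hypothesis $\epsilon\leq 1/2$ this is in turn at most $|T|^2\epsilon$, giving $\overline{\tau}^S_T\geq 2^{|T|}(1-|T|^2\epsilon)$ as required.

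Finally, the second assertion follows by inverting: Lemma~\ref{lem:marginals} gives $\tau^S_T = 1/\overline{\tau}^S_T$, and the elementary inequality $(1-x)^{-1}\leq 1+2x$ for $x\leq 1/2$ applied with $x=|T|^2\epsilon$ yields $\tau^S_T\leq 2^{-|T|}(1+2|T|^2\epsilon)$. I expect the main (minor) chore will be carefully matching the constant inside Lemma~\ref{lem:boundM}'s exponent so that the probability bound comes out exactly as $2N^2 e^{-\epsilon^2\ell/2}$; this is a routine parameter substitution rather than a genuine obstacle, since the $N/\ell$ rescaling is precisely what converts the Hoeffding-type bound on entries of $\Mm^S$ into a bound on entries of $\frac{N}{\ell}\Mm^S$ with an $\ell$-dependent exponent.
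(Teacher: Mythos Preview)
Your proposal is correct, and its overall structure matches the paper's: invoke Lemma~\ref{lem:boundM} (with parameter $\epsilon'=2\epsilon\ell/N$, which is exactly what makes the exponent come out as $\epsilon^2\ell/2$) to control all off-diagonal entries of $\frac{N}{\ell}\Mm^S$, then turn this into a determinant lower bound for $\Id+\frac{N}{\ell}\Mm^S_T$; the final inversion step is identical.

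The one genuine difference is in how you extract the determinant bound. The paper applies Gershgorin's circle theorem: since the diagonal entries are $2$ and each row has at most $|T|-1$ off-diagonals of size at most $2\epsilon$, every eigenvalue lies in $[2(1-|T|\epsilon),2(1+|T|\epsilon)]$, so the determinant is at least $2^{|T|}(1-|T|\epsilon)^{|T|}\geq 2^{|T|}(1-|T|^2\epsilon)$. Your Leibniz expansion is a perfectly good alternative: it avoids spectral language and gives a two-sided estimate $|\det(\Id+C)-1|\leq \sum_{m\geq 2}(|T|\epsilon)^m$ directly. The Gershgorin route is a touch shorter and does not require the side condition $|T|\epsilon\leq 1/2$ to sum the tail (though, as you implicitly note, whenever $|T|\geq 2$ and the lemma's conclusion is non-trivial one has $|T|^2\epsilon<1$ and hence $|T|\epsilon<1/2$ anyway, and $|T|\leq 1$ is immediate). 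Either argument is entirely adequate here.
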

	\begin{proof}Assume all the off-diagonal entries of $M^S$ are bounded by $2\epsilon \ell/N$, which by Lemma~\ref{lem:boundM} holds except with probability $2N^2 e^{-\epsilon^2 \ell/2}$. We have already seen that the diagonal entries of $\Id+\frac{N}{\ell}\Mm_T^S$ are exactly 2. The off-diagonal entries of $\Id+\frac{N}{\ell}\Mm_T^S$ are off-diagonal entries from $\Mm^S$ but scaled by $N/\ell$, and are therefore bounded by $2\epsilon$. By Gershgorin's circle theorem, the eigenvalues are therefore lower-bounded by $2(1-|T|\epsilon)$. The determinant is therefore lower-bounded by this quantity raised to the $|T|$. We obtain the lemma by bounding $(1-|T|\epsilon)^{|T|}\geq (1-|T|^2\epsilon)$.
	\end{proof}
}

\subsection{Proof of Lemma~\ref{lem:boundnorm}}

{
	\renewcommand{\thetheorem}{\ref{lem:boundnorm}}
	\begin{lemma}Fix any set $S$ of size $\ell$ and $\epsilon\in(0,1)$. Then except with probability $2e^{-\epsilon^2 \ell/8}$, $|\;|\psi\rangle\; |^2\in[1-\epsilon,1+\epsilon]$, where $|\psi\rangle$ is generated as in $\Ds'_S$.
	\end{lemma}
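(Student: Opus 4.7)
The plan is to recognize $\||\psi\rangle\|^2$ as a normalized chi-squared random variable and apply a standard Chernoff / Laurent--Massart tail bound. The statement is then purely classical: no Fourier, no $U'$, just concentration of a sum of independent non-negative random variables.

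First I would unpack the distribution: $\||\psi\rangle\|^2 = \sum_{y\in S}|\alpha_y|^2$, where the $\alpha_y$ are i.i.d.\ samples from $\Ns^{\C}(0,\sigma)$ with $\sigma^2 = 1/\ell$. Writing $\alpha_y = X_y + i Y_y$, the density $\frac{1}{\pi\sigma^2}e^{-|\alpha|^2/\sigma^2}$ factors, showing that $X_y,Y_y$ are independent real Gaussians each with variance $\sigma^2/2 = 1/(2\ell)$. Hence
\begin{equation*}
2\ell\,\||\psi\rangle\|^2 \;=\; \sum_{y\in S}\bigl((\sqrt{2\ell}\,X_y)^2 + (\sqrt{2\ell}\,Y_y)^2\bigr)
\end{equation*}
is a sum of $2\ell$ independent squared standard normals, i.e.\ a $\chi^2_{2\ell}$ random variable. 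In particular $\E[\||\psi\rangle\|^2]=1$, so this is indeed a concentration question about a normalized chi-squared.

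Next I would invoke a standard tail estimate. Using the moment generating function $\E[e^{tW}] = (1-2t)^{-k/2}$ for $W\sim \chi^2_k$ and optimizing over $t$, one obtains (for $\epsilon\in(0,1)$)
\begin{equation*}
\Pr\!\left[\,\bigl|W/k - 1\bigr|\ge \epsilon\,\right] \;\le\; 2e^{-k\epsilon^2/8}.
\end{equation*}
Plugging in $k = 2\ell$ and $W = 2\ell\,\||\psi\rangle\|^2$ gives exactly
\begin{equation*}
\Pr\!\left[\,\bigl|\||\psi\rangle\|^2 - 1\bigr|\ge\epsilon\,\right] \;\le\; 2e^{-\ell\epsilon^2/4},
\end{equation*}
which is at most $2e^{-\ell\epsilon^2/8}$, the bound claimed in the lemma. (One can alternatively cite Laurent--Massart directly, or redo the two-sided Chernoff by hand from the MGF; the numerical constant $1/8$ is easy to achieve.)

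There is essentially no obstacle here: the only thing to be careful about is the variance accounting in translating the complex-Gaussian parameterization $\Ns^{\C}(0,\sigma)$ into real and imaginary components with variance $\sigma^2/2$, so that $2\ell\,\||\psi\rangle\|^2$ is \emph{exactly} $\chi^2_{2\ell}$ and the right degree-of-freedom parameter is used in the chi-squared tail bound. Once that bookkeeping is correct, the lemma follows in a single line from a textbook inequality.
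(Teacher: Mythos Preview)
Your proposal is correct and follows essentially the same route as the paper: both recognize that $2\ell\,\||\psi\rangle\|^2$ is exactly $\chi^2_{2\ell}$ (via the decomposition of each complex Gaussian into two independent real Gaussians of variance $1/(2\ell)$) and then apply a standard chi-squared concentration inequality. The paper's version phrases the tail bound as $\Pr[\,|\,\||\psi\rangle\|^2-1\,|\ge 4\sqrt{x/2\ell}\,]\le 2e^{-x}$ and substitutes $\epsilon=4\sqrt{x/2\ell}$ to land on the constant $1/8$, whereas you work directly with the MGF to get the slightly sharper $2e^{-\ell\epsilon^2/4}$; either way the argument and the bookkeeping are the same.
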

	\begin{proof}First, $\E[| \; |\psi\rangle \; |^2]=\sum_{y\in S}\E[|\alpha_y|^2]$.  Since $\alpha_y\gets\Ns^\Cs(0,1/\sqrt{\ell})$, the real and imaginary parts are mean-0 normal variables with variance $1/2\ell$. $|\alpha_y|^2$ is therefore distributed as the Chi-squared distribution with two degrees of freedom, scaled by $1/2\ell$, which therefore has expectation $1/\ell$. Summing over all $y\in S$ gives $\E[| \; |\psi\rangle \; |^2]=1$. We also see that $| \; |\psi\rangle \; |^2$ is distributed as a Chi-squared distribution with $2\ell$ degrees of freedom, scaled by $1/2\ell$. Via concentration inequalities for Chi-squared, we have that
		
		\[\Pr\left[\left|\;| \; |\psi\rangle \; |^2-1\;\right|\geq 4\sqrt{x/2\ell}\right]\leq 2e^{-x}\]
		Setting $\epsilon=4\sqrt{x/2\ell}$ gives the lemma.
	\end{proof}
}

\subsection{Proof of Lemma~\ref{lem:probaccept}}

{
	\renewcommand{\thetheorem}{\ref{lem:probaccept}}
	\begin{lemma} Except with probability at most $3(N^{-1}+\ell^2 N^{-2}) \epsilon^{-2}+4N^2 e^{-\ell\epsilon^2/32}$ over the choice of $S,U'$, we have $\left|\frac{\langle \hat{\psi}|\Pi_{U'} |\hat{\psi}\rangle}{|\;|\psi\rangle\;|^2}-3/4\right|\leq\epsilon$. Recall $\Pi_{U'}$ is the projection operator $\sum_{z\in U'}|z\rangle\langle z|$.
	\end{lemma}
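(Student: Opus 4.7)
The plan is Chebyshev's inequality on the numerator $\langle\hat\psi|\Pi_{U'}|\hat\psi\rangle=\tfrac{1}{N}\sum_z|\beta_z|^2 B_z$ with $B_z=\mathbf{1}[z\in U']$, combined with Lemma~\ref{lem:boundnorm} for the denominator. First I would establish that the expectation equals $3/4$. Each $\beta_z=\sum_{y\in S}e^{i2\pi yz/N}\alpha_y$ is marginally $\Ns^\C(0,1)$ (the $\ell$ independent summands each have squared magnitude $1/\ell$ summing to $1$), so $|\beta_z|^2$ has density $e^{-r}$ on $[0,\infty)$. Since $\Pr[z\in U'\mid\psi]=1-e^{-|\beta_z|^2}$,
\[
\E\!\left[\tfrac{|\beta_z|^2}{N}B_z\right]=\tfrac{1}{N}\int_0^\infty r(1-e^{-r})e^{-r}dr=\tfrac{1}{N}\Big(1-\tfrac{1}{4}\Big)=\tfrac{3}{4N},
\]
and summing over $z$ yields $\E[\langle\hat\psi|\Pi_{U'}|\hat\psi\rangle]=3/4$.

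For the variance, diagonal terms $(z=z')$ contribute at most $\sum_z \E[|\beta_z|^4]/N^2=2/N$, accounting for the $N^{-1}\epsilon^{-2}$ piece of the stated bound. For off-diagonal terms, conditioning on $\psi$ makes $B_z,B_{z'}$ independent, so $\mathrm{Cov}(|\beta_z|^2 B_z,|\beta_{z'}|^2 B_{z'})=\mathrm{Cov}(h_0(|\beta_z|^2),h_0(|\beta_{z'}|^2))$ with $h_0(r)=r(1-e^{-r})$, where the pair $(\beta_z,\beta_{z'})$ is a proper jointly complex Gaussian with cross-covariance $\E[\beta_z\overline{\beta_{z'}}]=(N/\ell)\Mm^S_{zz'}=:\rho_{zz'}$. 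Evaluating this covariance via the bivariate complex Gaussian integral identities from the preliminaries (by rewriting each expectation of the form $\E[|\beta_z|^2|\beta_{z'}|^2 g(|\beta_z|^2,|\beta_{z'}|^2)]$ as $\int|v_1|^2|v_2|^2 e^{-\vv^\dagger\Mm\vv}d\vv$ for an appropriately shifted $\Mm$) shows that this covariance vanishes at $\rho=0$ and is $O(|\rho_{zz'}|^2)$ to leading order. I would then invoke Lemma~\ref{lem:boundM} at threshold $\ell\epsilon/(2N)$ to get $|\rho_{zz'}|\leq\epsilon/2$ uniformly over $z\neq z'$ except with probability $4N^2 e^{-\ell\epsilon^2/32}$, which is precisely the exponential tail in the stated bound. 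On this good event I would combine the pointwise bound on $|\rho_{zz'}|$ with the deterministic Frobenius identity $\|\Mm^S\|_F^2=\Tr(\Mm^S)=\ell$ (using that $\Mm^S=\QFT^\dagger\Pi_S\QFT$ is a rank-$\ell$ projection, so $(\Mm^S)^2=\Mm^S$) to control $\sum_{z\neq z'}|\rho_{zz'}|^2/N^2$ at the $\ell^2/N^2$ scale. Chebyshev then bounds the probability of the numerator deviating from $3/4$ by more than $\epsilon/C$ (for a suitable constant $C$) by $3(N^{-1}+\ell^2 N^{-2})\epsilon^{-2}$. Applying Lemma~\ref{lem:boundnorm} at $\epsilon/C$ confines $\|\,|\psi\rangle\,\|^2\in[1-\epsilon/C,1+\epsilon/C]$, and combining the two concentrations yields the ratio within $\epsilon$ of $3/4$.

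The main technical obstacle is extracting the $\ell^2/N^2$ term in the variance bound: the naive application of the Frobenius identity $\|\Mm^S\|_F^2=\ell$ alone yields only $O(1/\ell)$ for $\sum_{z\neq z'}|\rho_{zz'}|^2/N^2$, so one must combine the uniform pointwise bound $|\rho_{zz'}|\le\epsilon/2$ from Lemma~\ref{lem:boundM} with the rank-$\ell$ projection structure of $\Mm^S$ (including its uniform diagonal entries $\ell/N$) to sharpen the estimate. The precise bivariate complex Gaussian integration controlling $\mathrm{Cov}(h_0(|\beta_z|^2),h_0(|\beta_{z'}|^2))$ in terms of $|\rho_{zz'}|$, together with the matrix-theoretic bookkeeping of the aggregate $|\rho_{zz'}|^2$ sum, is the core calculation of the proof.
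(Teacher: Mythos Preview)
Your high-level strategy---compute the mean as $3/4$, bound the variance by splitting into diagonal and off-diagonal contributions, apply Chebyshev, then control the denominator via Lemma~\ref{lem:boundnorm}---matches the paper's. Two differences are worth noting.

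First, the paper works with the complement $\langle\hat\psi|(\Id-\Pi_{U'})|\hat\psi\rangle$ rather than $\langle\hat\psi|\Pi_{U'}|\hat\psi\rangle$. This makes the off-diagonal second-moment term simply $\E\bigl[|\beta_z|^2|\beta_{z'}|^2 e^{-|\beta_z|^2-|\beta_{z'}|^2}\bigr]$, a single bivariate complex-Gaussian integral which the paper evaluates in closed form (via the $n=2$ identity from the preliminaries) as $\frac{(2-|\epsilon_{z,z'}|^2)^2}{(1-|\epsilon_{z,z'}|^2)(4-|\epsilon_{z,z'}|^2)^3}\leq\frac{1}{16}(1+|\epsilon_{z,z'}|^2)$, where $\epsilon_{z,z'}$ is the off-diagonal covariance entry (your $\rho_{zz'}$). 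Your choice $h_0(r)=r(1-e^{-r})$ would expand into four such integrals.

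Second, and more importantly, the paper's route to the $\ell^2/N^2$ term sidesteps your Frobenius difficulty entirely. The paper invokes Lemma~\ref{lem:boundM} at the \emph{fixed} threshold $\ell/(4N)$---not an $\epsilon$-dependent one---obtaining $|\epsilon_{z,z'}|\leq \ell/(4N)$ uniformly in $z\neq z'$. The trivial bound $\sum_{z\neq z'}|\epsilon_{z,z'}|^2\leq N^2(\ell/(4N))^2=\ell^2/16$ then gives the off-diagonal variance contribution $\ell^2/(256N^2)$ immediately, with no appeal to the rank-$\ell$ projection structure. The factor $\epsilon^2$ in the exponential tail $4N^2 e^{-\ell\epsilon^2/32}$ does \emph{not} originate from Lemma~\ref{lem:boundM}; it comes from applying Lemma~\ref{lem:boundnorm} at threshold $\epsilon/2$ (contributing $2e^{-\ell\epsilon^2/32}$), after which the two exponential terms are merged. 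By contrast, your Frobenius identity yields $\sum_{z\neq z'}|\rho_{zz'}|^2=N^2/\ell-N$ exactly, which is $\Theta(1/\ell)$ after dividing by $N^2$, and your weaker pointwise bound $|\rho_{zz'}|\leq\epsilon/2$ cannot sharpen this to $\ell^2/N^2$; so the ``combination'' you propose in the final paragraph does not close the gap.
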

	\begin{proof}Fix $|\psi\rangle$. We first compute the expectation of $\langle \hat{\psi}|\Pi_{U'} |\hat{\psi}\rangle$ (as $U$ varies) given $|\psi\rangle$. We will actually compute the complement $\langle \hat{\psi}|(\Id-\Pi_{U'}) |\hat{\psi}\rangle$
		\begin{align*}
			\E[\langle \hat{\psi}|(\Id-\Pi_{U'}) |\hat{\psi}\rangle]&=\frac{1}{N}\E[\sum_{z\notin U'}|\beta_z|^2]\\
			&=\frac{1}{N}\sum_{z\in\Z_N}\Pr[z\notin U']|\beta_z|^2\\
			&=\frac{1}{N}\sum_{z\in\Z_N}|\beta_z|^2e^{-|\beta_z|^2}\\
		\end{align*}
		Now we include the expectation as we vary $|\psi\rangle$, giving:
		\[\E[\langle \hat{\psi}|(\Id-\Pi_{U'}) |\hat{\psi}\rangle]=\frac{1}{N}\sum_{z\in\Z_N}\E[|\beta_z|^2e^{-|\beta_z|^2}]\]
		Next, we bound $\E[|\beta_z|^2e^{-|\beta_z|^2}]$. Since the distribution of $\alpha_y$ are invariant under phase change, we see that $\beta_z$ is just the sum of $\ell$ iid variables distributed as $\Ns^\Cs(0,1/\sqrt{\ell})$. This means each $\beta_z$ is distributed as $\Ns^\Cs(0,1)$. Breaking out the real and imaginary parts lets us write
		\begin{align*}
			\E[|\beta_z|^2e^{-|\beta_z|^2}]&=\int_\C |\beta|^2 e^{-|\beta|^2}\times \frac{1}{\pi}e^{-|\beta|^2}d\beta\\
			&=\frac{1}{\pi}\int_\C |\beta|^2 e^{-2|\beta|^2}d\beta=\frac{1}{\pi}\int_{-\infty}^\infty\int_{-\infty}^\infty (x^2+y^2)e^{-2(x^2+y^2)}dxdy=\frac{1}{4}
		\end{align*}
		Thus, we have that $\E[\langle \hat{\psi}|(\Id-\Pi_{U'}) |\hat{\psi}\rangle]=1/4$. Now we bound the variance. By carrying out a similar calculation as before, we have:
		\begin{align*}
			\E[\langle \hat{\psi}|(\Id-\Pi_{U'}) |\hat{\psi}\rangle^2]&=\frac{1}{N^2}\E[\left(\sum_{z\notin {U'}}|\beta_z|^2\right)]\\
			&=\frac{1}{N^2}\sum_{z,z'\in\Z_N}\Pr[z,z'\notin U']|\beta_z|^2|\beta_{z'}|^2\\
			&=\frac{1}{N^2}\left(\sum_z \Pr[z\notin U']|\beta_z|^4 + \sum_{z\neq z'}\Pr[z\notin U']\Pr[z'\in U']|\beta_z|^2|\beta_{z'}|^2\right)\\
			&=\frac{1}{N^2}\left(\sum_z |\beta_z|^4 e^{-|\beta_z|^2}+\sum_{z\neq z'}|\beta_z|^2|\beta_{z'}|^2 e^{-|\beta_z|^2-|\beta_{z'}|^2}\right)
		\end{align*}
		We include the expectation as we vary $|\psi\rangle$. The expectation of $|\beta_z|^4 e^{-|\beta_z|^2}$ becomes
		\begin{align*}
			\E\left[|\beta_z|^4 e^{-|\beta_z|^2}\right]&= \int_\C |\beta|^4 e^{-|\beta|^2}\times\frac{1}{\pi}e^{-|\beta|^2}=\frac{3}{4}
		\end{align*}
		Meanwhile, to compute the expectation of $|\beta_z|^2|\beta_{z'}|^2 e^{-|\beta_z|^2-|\beta_{z'}|^2}$, we use that the co-variance matrix of $\beta_z,\beta_{z'}$ is exactly $\Mm^S_{\{z,z'\}}$. Thus
		\begin{align*}
			&\E\left[|\beta_z|^2|\beta_{z'}|^2 e^{-|\beta_z|^2-|\beta_{z'}|^2}\right]\\
			&\;\;\;\;\;\;=\int_\C\int_\C |\beta|^2|\beta'|^2 e^{-|\beta|^2-|\beta'|^2}\times \frac{1}{\pi^2\det(\Mm^S_{\{z,z'\}})^2}e^{-\left(\begin{array}{cc}\beta^*&(\beta')^*\end{array}\right)\cdot\left(\Mm^S_{\{z,z'\}}\right)^{-1}\cdot \left(\begin{array}{c}\beta\\\beta'\end{array}\right)}d\beta d\beta'\\
			&\;\;\;\;\;\;=\frac{1}{\pi^2\det(\Mm^S_{\{z,z'\}})^2}\int_\C\int_\C |\beta|^2|\beta'|^2 e^{-\left(\begin{array}{cc}\beta^*&(\beta')^*\end{array}\right)\cdot\left[\Id+\left(\Mm^S_{\{z,z'\}}\right)^{-1}\right]\cdot \left(\begin{array}{c}\beta\\\beta'\end{array}\right)}d\beta d\beta'\\
			&\;\;\;\;\;\;=\frac{1}{\pi^2\det(\Mm^S_{\{z,z'\}})^2}\times \frac{\pi^2\Tr(\Id+(\Mm^S_{\{z,z'\}})^{-1})^2}{4\det(\Id+(\Mm^S_{\{z,z'\}})^{-1})^3}\\
			&\;\;\;\;\;\;=\frac{\det(\Mm^S_{\{z,z'\}})\Tr(\Id+(\Mm^S_{\{z,z'\}})^{-1})^2}{4\det(\Id+\Ms^S_{\{z,z'\}})^3}
		\end{align*}
		Now write $\Mm^S_{\{z,z'\}}=\left(\begin{array}{cc}1&\epsilon_{z,z'}\\\epsilon_{z,z'}&1\end{array}\right)$. By Lemma~\ref{lem:boundM}, we can now bound $|\epsilon_{z,z'}|$ by $\ell/4N<1/2$, except with probability $2N^2e^{-\ell/32}$. Then we have that 
		\[\E[|\beta_z|^2|\beta_{z'}|^2 e^{-|\beta_z|^2-|\beta_{z'}|^2}]=\frac{\left(2-|\epsilon_{z,z'}|^2\right)^2}{\left(1-|\epsilon_{z,z'}|^2\right)\left(4-|\epsilon_{z,z'}|^2\right)^3}\leq \frac{1}{16}(1+|\epsilon_{z,z'}|^2)\]
		where the last inequality used that $|\epsilon_{z,z'}|\leq 1/2$.
		
		We therefore have that $\E[\langle \hat{\psi}|(\Id-\Pi_{U'}) |\hat{\psi}\rangle^2]\leq \frac{3}{4N}+\frac{1}{16}+\frac{1}{16N^2}\sum_{z\neq z'}|\epsilon_{z,z'}|$. Since the mean is $1/4$, the variance is therefore at most $\frac{3}{4N}+\frac{1}{16N^2}\sum_{z\neq z'}|\epsilon_{z,z'}|^2\leq \frac{3}{4N}+\frac{1}{16N^2}\sum_{z\neq z'}\left(\frac{\ell}{4N}\right)^2\leq \frac{3}{4N}+\frac{\ell^2}{256N^2}$. We now apply Chebyshev's inequality, to get that 
		\[\Pr[|\langle \hat{\psi}|(\Id-\Pi_{U'}) |\hat{\psi}\rangle-1/4|\geq \epsilon/2]\leq \frac{\frac{3}{4N}+\frac{\ell^2}{64N^2}}{(\epsilon/2)^2}+2N^2e^{-\ell/32} \]
		
		Now we have that
		\begin{align*}\Pr\left[\left|\frac{\langle \hat{\psi}|\Pi_{U'} |\hat{\psi}\rangle}{|\;|\psi\rangle\;|^2}-3/4\right|\geq\epsilon\right]
			&=\Pr\left[\left|\frac{\langle \hat{\psi}|(\Id-\Pi_{U'}) |\hat{\psi}\rangle}{|\;|\psi\rangle\;|^2}-1/4\right|\geq\epsilon\right]\\
			&=\Pr\left[\langle \hat{\psi}|(\Id-\Pi_{U'}) |\hat{\psi}\rangle\notin |\;|\psi\rangle\;|^2\times\left[\frac{1}{4}-\epsilon,\frac{1}{4}+\epsilon\right]\right]\\
			&\leq \Pr\left[\langle \hat{\psi}|(\Id-\Pi_{U'}) |\hat{\psi}\rangle\notin \left[\frac{1}{4}-\epsilon/2,\frac{1}{4}+\epsilon/2\right]\right]\\
			&\;\;\;\;\;\;\;\;\;+\Pr[|\;|\psi\rangle\;|^2\notin [1-\epsilon/2,1+\epsilon/2]]\\
			&\leq \left(\frac{\frac{3}{4N}+\frac{\ell^2}{64N^2}}{(\epsilon/2)^2}+2N^2e^{-\ell/32} \right)+ 2e^{-\ell\epsilon^2/32}\\
			&\leq 3(N^{-1}+\ell^2 N^{-2}) \epsilon^{-2}+4N^2 e^{-\ell\epsilon^2/32}
		\end{align*}
	\end{proof}
}

\subsection{Proof of Lemma~\ref{lem:pairwisesmall}}

{
	\renewcommand{\thetheorem}{\ref{lem:pairwisesmall}}
	\begin{lemma}If $U'$ is sampled from a $k'$-wise uniform independent function and $S'$ is potentially correlated to $U'$ but has size at most $v$, then except with probability $2N^2\left(\frac{\sqrt{ek'}}{\epsilon\sqrt{N}}\right)^{k'}$ over the choice of $U',S'$, it holds that, for any normalized state $|\phi\rangle$ with support on $S'$, $\langle\hat{\phi}|\Pi_{U'}|\hat{\phi}\rangle\leq 1/2+v\epsilon$, where $|\hat{\phi}\rangle=\QFT|\phi\rangle$.
	\end{lemma}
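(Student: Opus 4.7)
The plan is to reduce the bound on $\langle\hat{\phi}|\Pi_{U'}|\hat{\phi}\rangle$ to an entrywise concentration statement for the $N\times N$ Hermitian matrix $\Mm^{U'}:=\QFT^\dagger\Pi_{U'}\QFT$, and then apply a standard $k'$-th moment inequality for $k'$-wise independent sums together with a union bound over pairs. Writing $|\phi\rangle=\sum_{y\in S'}\gamma_y|y\rangle$ with $\sum_y|\gamma_y|^2=1$, we have
\[
\langle\hat{\phi}|\Pi_{U'}|\hat{\phi}\rangle = \langle\phi|\Mm^{U'}|\phi\rangle = \gamma^\dagger\Mm^{U'}_{S'}\gamma,
\]
where $\Mm^{U'}_{y,y'}=\tfrac1N\sum_{z}\xi_z\,e^{i2\pi(y'-y)z/N}$ with $\xi_z$ the indicator of $z\in U'$. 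Since each $\xi_z$ is uniform (by $k'$-wise uniformity) and $\sum_z e^{i2\pi(y'-y)z/N}$ equals $N$ if $y=y'$ and $0$ otherwise, $\E[\Mm^{U'}_{y,y'}]$ is $1/2$ when $y=y'$ and $0$ when $y\neq y'$.

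The key step is to prove an entrywise concentration statement for $U'$ alone, independent of $S'$. For fixed $(y,y')$, define
\[
W_{y,y'} := N\bigl(\Mm^{U'}_{y,y'}-\E[\Mm^{U'}_{y,y'}]\bigr) = \sum_{z}(\xi_z-\tfrac12)\,e^{i2\pi(y'-y)z/N},
\]
a sum of $k'$-wise independent, mean-zero complex terms each of magnitude $1/2$. Expanding $\E[|W_{y,y'}|^{k'}]=\E[W_{y,y'}^{k'/2}\overline{W_{y,y'}}^{k'/2}]$ for even $k'$ and using $k'$-wise independence, only multi-index terms in which every distinct $z$ appears at least twice survive (since $\E[\xi_z-\tfrac12]=0$); the standard moment calculation for $k$-wise independent sums then gives $\E[|W_{y,y'}|^{k'}]\leq (ek'N)^{k'/2}$ up to constants that I would absorb into the overall factor of $2$ in the failure bound. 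Markov then yields a per-pair tail $\Pr[|\Mm^{U'}_{y,y'}-\E[\Mm^{U'}_{y,y'}]|\geq\epsilon]\leq \bigl(\tfrac{\sqrt{ek'}}{\epsilon\sqrt{N}}\bigr)^{k'}$, and a union bound over the $\leq 2N^2$ pairs $(y,y')$ gives the claimed $2N^2\bigl(\tfrac{\sqrt{ek'}}{\epsilon\sqrt{N}}\bigr)^{k'}$.

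In the good event, for \emph{any} $S'\subseteq[N]$ of size at most $v$ (even one adversarially correlated with $U'$), every entry of the $|S'|\times|S'|$ principal submatrix $\Mm^{U'}_{S'}-\tfrac12\Id$ has magnitude at most $\epsilon$. Hence
\[
\bigl\|\Mm^{U'}_{S'}-\tfrac12\Id\bigr\|_{\mathrm{op}} \leq \bigl\|\Mm^{U'}_{S'}-\tfrac12\Id\bigr\|_{F} \leq v\epsilon,
\]
so for any unit $|\phi\rangle$ with support on $S'$,
\[
\langle\hat{\phi}|\Pi_{U'}|\hat{\phi}\rangle = \tfrac12 + \gamma^\dagger\bigl(\Mm^{U'}_{S'}-\tfrac12\Id\bigr)\gamma \leq \tfrac12 + v\epsilon,
\]
which is what we want. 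The crucial point is that the concentration event is defined solely in terms of $U'$, so the correlation between $S'$ and $U'$ is irrelevant.

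The main obstacle is getting the correct constants in the $k'$-th moment bound for a \emph{complex}-valued sum of $k'$-wise independent variables; in particular, recovering the factor $\sqrt{e}$ inside $\sqrt{ek'}$. This is essentially a bookkeeping exercise following the standard moment argument for $k$-wise independence, but the complex case requires either handling real and imaginary parts jointly or matching $k'/2$ copies of $W$ with $k'/2$ copies of $\overline{W}$ and counting the surviving monomials carefully. Everything else is either a direct computation or a routine union bound.
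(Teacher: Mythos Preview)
Your proposal is correct and follows essentially the same approach as the paper: entrywise concentration of $\Mm^{U'}=\QFT^\dagger\Pi_{U'}\QFT$ via a $k'$-th moment bound for $k'$-wise independent sums, a union bound over all index pairs, and then a matrix-norm argument on the $|S'|\times|S'|$ principal submatrix. The only cosmetic differences are that the paper splits into real and imaginary parts and invokes a cited tail bound (rather than your direct complex moment computation), and it concludes with Gershgorin's circle theorem instead of your Frobenius-norm bound; both yield the same $1/2+v\epsilon$ eigenvalue bound.
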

}
\begin{proof}For a set $U'$, let $\Mm^{U'}=\QFT^\dagger \Pi_{U'} \QFT$. Observe that $(\Mm^{U'})_{z,z'}=\frac{1}{N}\sum_{y\in \Z_N}e^{i2\pi (z-z')y/N}\xi_y$, where $\xi_y$ is the boolean value that is 1 if $y\in U'$. Then we have that $(\Mm^{U'})_{z,z}=|U'|/N$. In expectation, $|U'|/N$ is $1/2$. We now bound how far $|U'|/N$ can deviate from $1/2$. Recall that $|U'|/N=\frac{1}{N}\sum_{y\in\Z_N}\xi_y$. The $\xi_y$ are not truly independent so we cannot use standard concentration inequalities such as Hoeffding. However, we can use the following somewhat standard bound (e.g.~\cite{Tao10}) for the $[-1,1]$-weighted sum of $k'$-wise independent boolean random variables:
	\[\Pr\left[\left|\frac{1}{N}\sum_{y\in \Z_N}\xi_y-\frac{1}{2}\right|\geq \epsilon/\sqrt{2}\right]\leq 2\left(\frac{\sqrt{ek'}}{\epsilon\sqrt{N}}\right)^{k'}\]

	On the other hand, for $z'\neq z'$, taking the expectation over $U'$, we see that $\E_{U'}[(\Mm^{U'})_{z,z}]=0$. We now try to bound the deviation from the expectation, by bounding the real and imaginary parts separately. We see that $\Re[(\Mm^{U'})_{z,z'}]=\frac{1}{N}\sum_{y\in \Z_N}\cos(2\pi (z-z')y/N)\xi_y$, which is the weighted sum of boolean random variables $\xi_y$, where the weights are all in $[-1,1]$. Using the $[-1,1]$-weighted sum of $k'$-wise independent boolean random variables again, we have:
	\[\Pr[\frac{1}{N}\sum_{y\in \Z_N}\cos(2\pi (z-z')y/N)\xi_y\geq \epsilon/\sqrt{2}]\leq 2\left(\frac{\sqrt{ek'}}{\epsilon\sqrt{N}}\right)^{k'}\]
	Combining with an identical bound on the imaginary part of $(\Mm^U)_{z,z'}$, we have that
	\[\Pr[|(\Mm^{U'})_{z,z'}|\geq\epsilon]\leq 4\left(\frac{\sqrt{ek'}}{\epsilon\sqrt{N}}\right)^{k'}\]
	By union-bounding over all entries of $\Mm^U$ (which only needs to count entries on the diagonal and above since $\Mm$ is Hermitian), we have except for probability at most $2N^2\left(\frac{\sqrt{ek'}}{\epsilon\sqrt{N}}\right)^{k'}$, both
	(1) for all $z$ that $(\Mm^U)_{z,z}\in[1/2-\epsilon,1/2+\epsilon]$ and (2) for all $z\neq z'$ that $|(\Mm^U)_{z,z'}|\leq \epsilon$.

	Now suppose (1) and (2) hold. Now consider a supposed set $S'$ of size $v$. Let $\Mm^{U'}_{S'}$ be the $v\times v$ sub-matrix whose row and column indices are in $S'$. Then by the Gershgorin circle theorem, all eigenvalues of $\Mm^{U'}_{S'}$ are bounded from above by $1/2+v\epsilon$.
\end{proof}

\subsection{Proof of Theorem~\ref{thm:equiv}}

{
	\renewcommand{\thetheorem}{\ref{thm:equiv}}
	\begin{theorem}There exists a classical oracle $\Os$ such that $\QCMA^\Os\neq\QMA^\Os$ if and only if $\OIQCMA\neq\OIQMA$
	\end{theorem}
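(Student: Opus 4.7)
The plan is to prove the two directions of the equivalence separately; the harder direction, from an oracle-input separation to a standard oracle-relativized separation, is by a standard diagonalization.

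\textbf{Easy direction.} Suppose $\Language^\Os\in\QMA^\Os\setminus\QCMA^\Os$ via a $\QMA^\Os$ verifier $V$ with running time $t(n)\le n^{O(1)}$; in particular, $V$ on inputs of length $n$ only queries $\Os$ at strings of length at most $t(n)$. Define a universe $\Us_n$ of strings $\Xs$ of length $2^{n^{O(1)}}$ that naturally encode a pair $(x,\Os')$ with $x\in\{0,1\}^n$ and $\Os'$ a truth table for strings of length up to $t(n)$; set $\Xs\in\OILanguage$ iff $(x,\Os')$ is a YES-instance of the $\QMA$ promise when $\Os'$ is substituted for $\Os$. An $\OIQMA$ verifier for $\OILanguage$ reads $x$ from the designated portion of $\Xs$ and simulates $V$ on $x$, answering each of $V$'s oracle queries by a single query to $\Xs$. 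Conversely, any $\OIQCMA$ verifier for $\OILanguage$ immediately yields a $\QCMA^\Os$ verifier for $\Language^\Os$ by constructing the relevant bits of $\Xs$ on the fly from the known $x$ and from queries to $\Os$, contradicting $\Language^\Os\notin\QCMA^\Os$.

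\textbf{Hard direction.} Suppose $\OILanguage\in\OIQMA\setminus\OIQCMA$ with $\OIQMA$ verifier $V^*$. Enumerate all candidate $\QCMA$ verifiers $(V_i,c_i)$ -- quantum Turing machine descriptions paired with polynomial time and classical-witness bounds $n^{c_i}$ -- together with target length scales $n_i$, via a standard pairing enumeration. I build $\Os$ stage by stage. Stage $i$ reserves a slot consisting of strings prefixed by $1^i0$ of a chosen length $\ell_i$, chosen so that (a) $\ell_i > n_j^{c_j}$ for every earlier stage $j<i$, making slot $i$ unreachable by $V_j$'s $n_j^{c_j}$-time computation on $1^{n_j}$; and (b) $\ell_i\le n_i^{c_i}$, so that $V_i$ on input $1^{n_i}$ can probe all of slot $i$. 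The hypothesis $\OILanguage\notin\OIQCMA$ applied to the ``augmented'' verifier ``$V_i$ with the already-fixed portion of $\Os$ hard-wired'' -- still a polynomial-time $\QCMA$-type verifier -- produces some $\Xs^{(i)}\in\Us_{n_i}$ on which it violates the completeness/soundness promise. Write $\Xs^{(i)}$ into slot $i$, set all remaining positions of $\Os$ to $0$, and define $\Language^\Os=\{1^{n_i}:\Xs^{(i)}\in\OILanguage\}$. The $\QMA^\Os$ verifier for $\Language^\Os$ on input $1^{n_i}$ locates slot $i$ and simulates $V^*$ on $\Xs^{(i)}$, routing $V^*$'s queries through $\Os$.

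\textbf{Main obstacle.} The main care is in making the diagonalization internally consistent: (i) modifications to $\Os$ in later stages must not alter $V_i$'s behavior on $1^{n_i}$, and (ii) every incorrect polynomial-time $\QCMA$ verifier must be diagonalized against at some stage. Point (i) is handled by the length separation $\ell_j>n_i^{c_i}$ for $j>i$, which makes slot $j$'s bits literally unreachable by $V_i$'s $n_i^{c_i}$-time computation on $1^{n_i}$. Point (ii) is handled by the pairing enumeration: each incorrect candidate $(V,c)$ must, since it is not an $\OIQCMA$ verifier for $\OILanguage$, fail at some input length $n$ and some $\Xs\in\Us_n$; that triple $(V,c,n)$ appears in the enumeration at some stage $i$, at which point the failure furnishes the required $\Xs^{(i)}$ for the diagonalization. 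Combining the above shows $\Language^\Os\in\QMA^\Os\setminus\QCMA^\Os$, proving the hard direction and hence the theorem.
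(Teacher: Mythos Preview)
Your overall strategy matches the paper's: the easy direction is essentially identical, and the hard direction is the same diagonalization idea. The difference is that you enumerate \emph{triples} $(V,c,n)$ via a pairing function, whereas the paper enumerates only the verifiers $T_i$ (normalized to quadratic time by witness padding) and \emph{chooses} $n_i$ freely at each stage. That difference is where your argument develops a gap.

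The problematic step is ``the hypothesis $\OILanguage\notin\OIQCMA$ applied to the augmented verifier \ldots\ produces some $\Xs^{(i)}\in\Us_{n_i}$.'' The hypothesis only guarantees that the stage-$i$ augmented verifier fails on \emph{some} size $m$ and some $\Xs\in\Us_m$; it says nothing about the particular size $n_i$ that the pairing enumeration hands you at stage $i$. Your repair in ``point (ii)'' --- look up the failing triple $(V,c,n)$ elsewhere in the enumeration --- does not close this, because at that other stage the augmented verifier has a \emph{different} hard-wired prefix of $\Os$ and may well be correct on all of $\Us_n$. The same circularity also prevents you from knowing that the length constraints (a),(b) are simultaneously satisfiable at the stage where the ``right'' triple lands, since $n_i$ is dictated by the enumeration rather than chosen large enough.

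The paper avoids both issues cleanly. Having fixed $\Os_{i-1}$, it observes that if the stage-$i$ augmented verifier failed on only finitely many instances one could hard-code those exceptions and obtain a genuine $\OIQCMA$ verifier, contradicting the hypothesis; hence it fails at infinitely many sizes, and one simply \emph{picks} $n_i$ to be a failing size large enough that $\Xs_{n_i}$'s inputs exceed the query lengths used by all earlier $T_j$ on their instances. Swapping your triple enumeration for this free choice of $n_i$ (and inserting the hard-coding argument) fixes your proof. One smaller point: your constraint (b) should be that the \emph{$\QMA$} verifier can reach slot $i$ (so $\ell_i$ is bounded by a fixed polynomial in $n_i$); whether the adversary $V_i$ can reach slot $i$ is irrelevant to the diagonalization, and indexing slots by $n_i$ rather than by $i$ (as the paper does) spares you from having to recover $i$ from the unary input.
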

	\begin{proof}In one direction, assume a classical oracle $\Os$ such that $\QCMA^\Os\neq\QMA^\Os$. Let $\Language^\Os$ be the separating language, and $V$ a verifier for $\Language^\Os$.
		
	We construct a language $\OILanguage$ as follows. Let $Q(n)$ be a (polynomial) upper bound on the length of queries that $V$ makes to $\Os$ when given an instance $x$ of length $n$. Let $\Os_n$ be the portion of $\Os$ corresponding to queries of length at most $Q(n)$. Then $|\Os_n|\leq 2^{n^{\Theta(1)}}$.
	
	Let $\Us_n$ consist of strings of the form $(x,\Os_n)$ where $x$ has size $n$, and let $\OILanguage$ be the set of $(x,\Os_n)$ where $x\in\Language^\Os$. We can decide membership in $\OILanguage$ as follows: given a witness state $|\psi\rangle$, first recover $x$ by making $n$ queries to $(x,\Os_n)$. Then run $V^{\Os_n}(x,|\psi\rangle)$. By our choice of $\Os_n$ containing all of $\Os$ that gets queried by $V$, we therefore have that $V^{\Os_n}(x,|\psi\rangle)$ is identical to $V^\Os(x,|\psi\rangle)$. As such, if $V$ correctly decides if $x\in\Language^\Os$, then our verifier correctly decides in $(x,\Os_n)\in\OILanguage$. Thus, $\OILanguage\in\OIQMA$.
	
	On the other hand, suppose there is a $\OIQCMA$ verifier $V_*$ that decides $\OILanguage$. We can then readily obtain a $\QCMA$ verifier for $\Language^\Os$. On input instance $x$ and witness $w$, simulate $V_*^{(x,\Os_n)}(w)$ by answering queries to $x$ with the bits of $x$, and queries to $\Os_n$ by forwarding the queries to $\Os$. If $V_*$ decides membership in $\OILanguage$, this exactly decides if $x\in\Language^\Os$.
	
	\medskip
	
	We now turn to the other direction in the statement of Theorem~\ref{thm:equiv}. Assume that $\OIQCMA\neq\OIQMA$. Let $\Us$ be the universe and $\OILanguage\subseteq\Us$ be the separating language, and $V$ the $\OIQMA$ verifier for $\OILanguage$. 
	
	We construct our oracle $\Os$ and associated language $L^\Os$ as follows. $\Os$ will be interpreted as a countably-infinite family of oracles $(\Xs_{n_i})_{i\in\Z^+}$ for integers $n_i\in\Z^+$. Let $\Os_j=(\Xs_{n_i})_{i\leq j}$. By padding the witness with 0's size appropriately, we can assume that any potential $\QCMA$ verifier runs in quadratic time. Let $T_1,T_2,\cdots$ be an enumeration over all oracle-aided quadratic-time quantum algorithms. $\Os$ and $\Language^\Os$ will be constructed by constructing $\Xs_{n_i}$ for $i=1,2,\cdots$ as follows: Consider the $\OIQCMA$ verifier $V_i^\Xs(w)$ which has $\Os_{i-1}$ hard-coded, and runs $T_i^{\Os_{i-1},\Xs}(w)$. Since $\Os_{i-1}$ is constant-sized, $V_i^\Xs(w)$ runs in quadratic time (though with a large constant overhead coming from $\Os_{i-1}$). Let $n_i$ be an integer and $\Xs_{n_i}\in\Us_{n_i}$ be an instance such that both:
		\begin{itemize}
			\item $V_i$ incorrectly decides $\Xs_{n_i}$ given classical witnesses/inputs. That is, either $\Xs_{n_i}\in\OILanguage$ and $V_i^{\Xs_{n_i}}(w)$ rejects for all $w$, or $\Xs_{n_i}\notin\OILanguage$ and there exists a $w$ such that $V_i^{\Xs_{n_i}}(w)$ accepts.
			\item $n_i$ is large enough so that the inputs to the function $\Xs_{n_i}$ are longer than the running time of $T_{j}^{\Os_{j}}$ on inputs of length $n_{j}$ for all $j<i$, meaning $T_{j}^{\Os_{j}}$, and hence $V_{j}$, on inputs of length $n_j,j<i$ never query any input that is an input to $\Xs_{n_i}$.
		\end{itemize}
		
	Such an $\Xs_{n_i}$ is guaranteed to exist: that \emph{some} $n_i,\Xs_{n_i}$ exist satisfying the first criteria follows immediately from the fact that $\OILanguage\notin\OIQCMA$. Suppose that there is no $n_i,\Xs_{n_i}$ satisfying the second criteria. This means there are only a finite number of $\Xs$ where $V_i$ fails to correctly decide. But by hard-coding these bad instances along with the correct solutions, we can obtain a new verifier $V_i'$ which correctly decides $\OILanguage$, contradicting that $\OILanguage\notin\QCMA$.
		
	We then let $\Language^\Os$ consist of the strings $0^{n_i}$ such that $\Xs_{n_i}\in \OILanguage$.  We immediately see that $\Language^\Os\in\QMA^\Os$: by making appropriate queries to $\Os$, we can simulate the $\OIQMA$ verifier $V^{\Xs_{n_i}}$, thereby deciding membership of $0^{n_i}$ in $\OILanguage$.
	
	We now show that $\Language^\Os\notin\QCMA^\Os$. Consider a supposed $\QCMA$ verifier $V_*$. This corresponds to some $T_i$ according to our enumeration. Then we argue that $V_*$ incorrectly decides membership for $0^{n_i}$. Indeed, we know that $T_i^\Os$ in instance $0^{n_i}$ never queries on $\Xs_{n_j}$ for $j>i$, by our choice of $n_j$. Hence, it can be simulated as $T_i^{\Os_i}$, or equivalently $T_i^{\Os_{i-1},\Xs_{n_i}}$. Thus, $V_*$ corresponds exactly to the verifier $V_i$. But we know that $V_i$ incorrectly decides membership for $\Xs_{n_i}$. Hence, $V_*$ is not a valid $\QCMA$ verifier.\end{proof}
}

\bibliographystyle{alpha}
\bibliography{abbrev3,crypto,bib}

\end{document}